\newcommand{\ket}[1]{\big| #1 \big\rangle}  % | >
\newcommand{\bra}[1]{\big\langle #1 \big|}  % < |
\newcommand{\braket}[2]{\big\langle #1 \big| #2 \big\rangle}                 % < | >
\newcommand{\hypergeometricseries}[5]{{}_{#1}F_{#2}\!\left(\left.\!\!\!\begin{array}{c} #3 \\ #4 \end{array}\!\!\right| #5 \right)}
\newtheorem{prop}{Proposition}
\newtheorem{corollary}[prop]{Corollary}
\newtheorem{claim}[prop]{Claim}
\begin{document}

\title{Multimarked Spatial Search by \\ Continuous-Time Quantum Walk%\thanks{Grants or other notes
%about the article that should go on the front page should be
%placed here. General acknowledgments should be placed at the end of the article.}
}
%\subtitle{Do you have a subtitle?\\ If so, write it here}

%\titlerunning{Short form of title}        % if too long for running head

\author[1]{Pedro H. G. Lug\~ao}
\author[1]{Renato Portugal}
\author[2]{Mohamed Sabri}
\author[3]{Hajime Tanaka}

\affil[1]{\small National Laboratory of Scientific Computing (LNCC)

          Petr\'opolis, RJ, 25651-075, Brazil\vspace{5pt}
          %\\          {portugal@lncc.br}
}

\affil[2]{\small Department of Mathematics

            The Open University of Sri Lanka
            
            Nawala, Nugegoda, 10250, Sri Lanka\vspace{5pt}
          %\email{htanaka@tohoku.ac.jp, sabrimath@dc.tohoku.ac.jp}           %  \\
              }

\affil[3]{\small Research Center for Pure and Applied Mathematics

          Graduate School of Information Sciences
          
          Tohoku University, Sendai 980-8579, Japan 
          %\email{htanaka@tohoku.ac.jp, sabrimath@dc.tohoku.ac.jp}           %  \\
              }

%\date{Received: date / Accepted: date}
%%%% The correct dates will be entered by the editor

\maketitle

\begin{abstract}
The quantum-walk-based spatial search problem aims to find a marked vertex using a quantum walk on a graph with marked vertices. We describe a framework for determining the computational complexity of spatial search by continuous-time quantum walk on arbitrary graphs by providing a recipe for finding the optimal running time and the success probability of the algorithm. The quantum walk is driven by a Hamiltonian derived from the adjacency matrix of the graph modified by the presence of the marked vertices. The success of our framework depends on the knowledge of the eigenvalues and eigenvectors of the adjacency matrix. The spectrum of the Hamiltonian is subsequently obtained from the roots of the determinant of a real symmetric matrix $M$, the dimensions of which depend on the number of marked vertices. The eigenvectors are determined from a basis of the kernel of $M$. We show each step of the framework by solving the spatial searching problem on the Johnson graphs with a fixed diameter and with two marked vertices. Our calculations show that the optimal running time is $O(\sqrt{N})$ with an asymptotic probability of $1+o(1)$, where $N$ is the number of vertices. 

\

\noindent
\textit{Keywords:} {Continuous-time quantum walk, spatial quantum search, Johnson graph, multiple marked vertices}
\end{abstract}

\section{Introduction}

Farhi and Gutmann~\cite{FG98} introduced the continuous-time quantum walk as a discrete-space version of the Schr\"odinger equation, where the positions of the particle are the vertices of a graph, specifically, a tree whose nodes represent solutions to a decision problem~\cite{sip12}. Childs and Goldstone~\cite{CG_04} used the continuous-time quantum walk to address the spatial search problem with one marked vertex, the goal being to locate the marked vertex when the walk departs from an initial state that is easy to prepare. The evolution is based on a Hamiltonian, which includes an extra term that depends on the location of the marked vertex and differs from the Hamiltonian used by Farhi and Gutmann, which is solely based on the graph's adjacency matrix. Childs and Goldstone analyzed the search on lattices, hypercubes, and complete graphs to determine whether the quantum-walk-based search algorithm is faster than the random-walk-based one on the same graphs. They obtained negative results in some cases, for instance, on $d$-dimensional lattices with $d\leqslant 4$. However, this should not be interpreted as definitive evidence that it is impossible to find a quantum-walk-based search algorithm that provides a speedup over the best random-walk-based case. In fact, using alternative discrete-time quantum walk models, it is possible to achieve a quadratic speedup for $d$-dimensional lattices with $d\geqslant 2$~\cite{AKR05,PF17}.

The origin of the quantum-walk-based spatial search problem lies in the confluence of two proposals: Farhi and Gutmann's 1998 suggestion for accelerating random walks on decision trees using a continuous-time quantum walk~\cite{FG98}, and Benioff's 2000 proposition to apply Grover's algorithm~\cite{Gro97} with the aim of speeding up classical search algorithms for a marked vertex on the two-dimensional lattice~\cite{Ben02}. Following the contributions of these authors, the area of quantum-walk-based spatial search algorithms has diverged into two different formulations in the literature. Both are significant as they expose distinct computational and mathematical aspects of quantum walks.

The first formulation investigates whether a quantum walk on a graph $G$ requires fewer steps to find a marked vertex than a classical random walk on the same graph $G$. This formulation was introduced by Shenvi et al.~\cite{SKW03}, who showed that a quantum walk on a hypercube locates a marked vertex faster than a random walk. In a context close related to the spatial search algorithm, Childs et al.~\cite{CFG02} showed that the propagation of a particle between a specific pair of nodes is exponentially faster when driven by a continuous-time quantum walk as compared to a random walk. 
Numerous papers in the literature address this formulation in both the discrete-time~\cite{SKW03,AP18,CPBS18,BLP21} and continuous-time~\cite{CG_04,CNAO16,PTB16,Won16,CRPM18,LCRLL19,HH19,WWW19,CNR20b,PBO21} cases. The results of this work align with this formulation. Experimental implementations of search algorithms by continuous-time quantum walk are described in~\cite{DGPSW20,WSXWJX20,BTPC21,QMWXWX22}.

The second formulation asks whether the number of steps that a classical random walk on a graph $G$ takes to find a marked vertex $v\in V(G)$ can be reduced by implementing a quantum walk on any graph $G'$, so that the quantum walk on $G'$ would uncover the necessary information to find the vertex $v$ in the original graph $G$. This formulation was introduced by Szegedy~\cite{Sze04a}, who showed that the quantum hitting time of a quantum walk on the bipartite graph $G'$ obtained from a graph $G$, where the random walk takes place, is quadratically less than the classical hitting time. However, to locate a marked vertex, it remains essential to prove that the success probability is $\Omega(1)$. Many papers in the literature have addressed the second formulation in both the discrete-time~\cite{MNRS11,MNRS12,KMOR16,AGJK20} and continuous-time~\cite{CNR20a,ACNR21} cases. In all these papers, the quantum walk takes place on a bipartite graph $G'$ derived from $G$ via a duplication process and the random walk takes place on an almost arbitrary graph $G$ with multiple marked vertices. Ambainis et al.~\cite{AGJK20} effectively resolved Szegedy's problem by showing that there is a quadratic speedup for finding a marked vertex with success probability $\tilde\Omega(1)$ on an arbitrary graph $G$ by discrete-time quantum walk on $G'$. Apers et al.~\cite{ACNR21} go along the same line by continuous-time quantum walk.
Somma and Ortiz~\cite{Somma2010} have described an interesting method to address combinatorial optimization problems that was used in~\cite{CNR20a} to modify the standard method proposed by Childs and Goldstone~\cite{CG_04} for the case with one marked vertex.

In this work, we address the first formulation of the spatial search problem by continuous-time quantum walk on arbitrary graphs with multiple marked vertices. Additionally, we provide a framework to determine the computational complexity of the search algorithm that is successful when the gap of two key eigenvalues $\lambda^\pm$ of the Hamiltonian is small enough. Note that our aim differs from the one described in references~\cite{CNAO16,CNR20b}, which focus on obtaining optimality conditions, namely, a time complexity that has a quadratic speedup compared to random walks. Our method may result in non-optimal time complexity, but it remains interesting if it is better than $O(N)$. For instance, the $d$-dimensional lattice with multiple marked vertices, where $d=2$ and $d=3$, are interesting candidates. These were analyzed in the reference~\cite{CG_04}, but only for the case with one marked vertex.
As an example of our framework, we analyze in full detail the search algorithm on the Johnson graph with two marked vertices. The framework is based on a real symmetric matrix $M$, which depends on the entries of the adjacency matrix and on the locations of the marked vertices. The fact that the determinant of this matrix is zero allows us to find the spectrum of the Hamiltonian. Then, we can find the gap between $\lambda^\pm$ and all quantities necessary to determine the algorithm's optimal running time and the success probability assuming that the gap tends to zero when the number of vertices increases.  The relevant quantities are basically the overlaps between the marked vertices and the initial condition with the eigenvectors of the Hamiltonian associated with $\lambda^\pm$. 

The Johnson graph $J(n,k)$ plays an important role in quantum walks because it is used in the element distinctness algorithm~\cite{MNRS11} and in many spatial search algorithms~\cite{Wong2016JPA,TSP22}. The vertices of $J(n,k)$ are $k$-subsets of a set with $n$ elements and two vertices are adjacent if and only if the intersection of these vertices is a $(k-1)$-subset. In this work, we use the Johnson graph $J(n,k)$ with two marked vertices as an example of all steps of our framework for determining the time complexity of the spatial search algorithm by continuous-time quantum walk. We show that the optimal running time to find a marked vertex is $\pi k\sqrt{N}/(2\sqrt{2})$ with asymptotic probability $1+o(1)$ when $k$ is fixed. 

The structure of this paper is as follows. In Sec.~\ref{sec:frame}, we lay out a framework for determining the computational complexity of the spatial search algorithm by continuous-time quantum walk on graphs with multiple marked vertices. In Sec.~\ref{sec:JohnsonGraph}, we apply this framework to analyze with mathematical rigor the spatial search algorithm on the Johnson graph with two marked vertices. Finally, in Sec.~\ref{sec:conc}, we present our concluding remarks.

\section{Multimarked framework}\label{sec:frame}

Let $G(V,E)$ be a finite connected simple graph with vertex set $V$ and edge set $E$.
We associate $G$ with a Hilbert space $\mathcal{H}$ with computational basis $\{|v\rangle : v\in V\}$, as is usually done in the definition of the continuous-time quantum walk~\cite{FG98}.
Let %$w\in W\subset V$ be a representative of a marked vertex, where 
$W$ be the set of marked vertices.
We consider the time-independent Hamiltonian of the form~\cite{CG_04}
\begin{equation}\label{eq:hamiltonian}
	H=-\gamma A- \sum_{w\in W} \ket{w} \bra{w},
\end{equation}
where $A$ denotes the adjacency matrix of $G(V,E)$, and $\gamma$ is a real and positive parameter.
The spatial search algorithm starts with the initial state $\ket{\psi(0)}$, which is a normalized eigenvector of $-\gamma A$ associated with the largest eigenvalue of $A$. If $G$ is regular,  $\ket{\psi(0)}$ is the uniform superposition of the computational basis
\begin{equation*}
	\ket{\psi(0)}=\frac{1}{\sqrt{N}}\sum_{v\in V}|v\rangle,
\end{equation*}
where $N$ is the number of vertices.  
The quantum state at time $t$ is therefore given by
\begin{equation*}
	\ket{\psi(t)}=\mathrm{e}^{-\mathrm{i}Ht}\ket{\psi(0)}.
\end{equation*}
The probability of finding a marked vertex at time $t$ is
\begin{equation}\label{eq:p(t)}
	p(t) = \sum_{w\in W}\left|\braket{w}{\psi(t)}\right|^2.
\end{equation}
Our goal is to find the optimal values of parameters $t$ and $\gamma$ so that the success probability is as high as possible.

\subsection{Eigenvalues and eigenvectors of the Hamiltonian}\label{subssec:spectrum}

For an operator $U$, let $\sigma(U)$ denote the spectrum of $U$.
Suppose that the adjacency matrix $A$ of $G(V,E)$ has exactly $k+1$ distinct eigenvalues $\phi_0>\phi_1>\dots>\phi_k$.
For $0\leqslant \ell\leqslant k$, let $P_{\ell}$ denote the orthogonal projector onto the eigenspace of $A$ in $\mathcal{H}$
for the eigenvalue $\phi_{\ell}$, that is,
\begin{equation*}
	A = \sum_{\ell=0}^k \phi_{\ell} P_{\ell}.
\end{equation*}
Let $\lambda$ and $\ket{\lambda}$ be an eigenvalue and a normalized eigenvector of $H$, respectively, that is
\begin{equation*}
	H\ket{\lambda}=\lambda \ket{\lambda}
\end{equation*}
and $\braket{\lambda}{\lambda}=1$. Hamiltonian $H$ and operator $-\gamma A$ may share common eigenvalues and eigenvectors, as shown in the next Proposition.

\begin{prop}\label{prop:lambda_in_sigma_A}
Let $| \lambda \rangle $ be an eigenvector of $H$ associated with eigenvalue $\lambda $. Then,
$\lambda\in \sigma(-\gamma A)$ and $(-\gamma A)\ket{\lambda}=\lambda\ket{\lambda}$ if and only if $\braket{w}{\lambda}=0$ for all $w\in W$.
\end{prop}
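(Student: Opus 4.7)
The plan is to work directly from the eigenvalue equation $H\ket{\lambda}=\lambda\ket{\lambda}$, substitute the explicit form \eqref{eq:hamiltonian} of $H$, and isolate the term $-\gamma A\ket{\lambda}$. Rearranging gives
\begin{equation*}
(-\gamma A)\ket{\lambda} = \lambda\ket{\lambda} + \sum_{w\in W}\braket{w}{\lambda}\ket{w},
\end{equation*}
which is the single identity that will drive both directions of the proof.

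For the easy direction, I would assume $\braket{w}{\lambda}=0$ for every $w\in W$, plug this into the displayed identity, and read off $(-\gamma A)\ket{\lambda}=\lambda\ket{\lambda}$ directly; this immediately puts $\lambda$ in $\sigma(-\gamma A)$ since $\ket{\lambda}$ is nonzero (being normalized).

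For the converse, I would assume $(-\gamma A)\ket{\lambda}=\lambda\ket{\lambda}$ and subtract this from the displayed identity to obtain $\sum_{w\in W}\braket{w}{\lambda}\ket{w}=0$. The conclusion $\braket{w}{\lambda}=0$ for all $w\in W$ then follows because the vectors $\{\ket{w}:w\in W\}$ are a subset of the computational basis $\{\ket{v}:v\in V\}$ and are therefore linearly independent. This linear independence is the only nontrivial ingredient, and since it is immediate from the setup of the Hilbert space $\mathcal{H}$, the proof is essentially a one-line rearrangement together with this observation—there is no real obstacle to overcome.
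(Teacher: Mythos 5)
Your proposal is correct and follows essentially the same route as the paper: both rearrange the eigenvalue equation into the identity $(-\gamma A)\ket{\lambda}=\lambda\ket{\lambda}+\sum_{w\in W}\braket{w}{\lambda}\ket{w}$ and read off each direction, with the observation that the $\ket{w}$ are linearly independent computational basis vectors appearing in the paper as the statement that the entries of $\sum_{w\in W}\ket{w}\braket{w}{\lambda}$ must vanish. No gaps; this matches the paper's argument.
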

\begin{proof}
Using the definition of $H$, we obtain
\begin{equation*}
	 \sum_{w\in W} \ket{w} \braket{w}{\lambda}=-(\gamma A+H)\ket{\lambda}.
\end{equation*}
If $\lambda\in \sigma(-\gamma A)$ and $(-\gamma A)\ket{\lambda}=\lambda\ket{\lambda}$, the right-hand side is zero and the entries of the vector $\sum_{w\in W} \ket{w} \braket{w}{\lambda}$ must be zero.
Then, we have $\braket{w}{\lambda}=0$ for all $w\in W$.
On the other hand, if $\braket{w}{\lambda}=0$ for all $w\in W$, the left-hand side is zero. Since $H\ket{\lambda}=\lambda\ket{\lambda}$, we have $\lambda\in \sigma(-\gamma A)$ and $(-\gamma A)\ket{\lambda}=\lambda\ket{\lambda}$.
\end{proof}

In Eq.~\eqref{eq:p(t)}, it is evident that eigenvectors $\ket{\lambda}$, which satisfy $\braket{w}{\lambda}=0$ for all $w \in W$, do not contribute to the computation of the probability $p(t)$ for locating a marked vertex. These eigenvectors correspond to eigenvalues $\lambda$ that are part of the spectrum of $-\gamma A$. In the continuation, we are interested in eigenvectors $\ket{\lambda}$ for which $\braket{w}{\lambda} \neq 0$, at least for one marked vertex $w$. The eigenvectors $\ket{\lambda}$ fulfilling the condition $\braket{\psi(0)}{\lambda}= 0$ are also irrelevant to the dynamics. To discern which eigenvectors of $H$ should be disregarded, the following proposition is necessary.
\begin{prop}\label{prop:detM=0}
    Let $\lambda $ be a real number such that  $\lambda\notin\sigma(-\gamma A)$.
    Define the real symmetric matrix $M^{\lambda}=\big(M_{ww'}^{\lambda}\big)_{w,w'\in W}$, where
\begin{equation}\label{eq:Mww}
	M_{ww'}^{\lambda}=\delta_{ww'}+ \sum_{\ell=0}^k \frac{\bra{w}P_{\ell}\ket{w'}}{\lambda+\gamma\phi_{\ell}}.
\end{equation}
Then, $\lambda$ is an eigenvalue of $H$ if and only if $\det(M^{\lambda})=0$.
\end{prop}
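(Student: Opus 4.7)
The plan is to rewrite the eigenvalue equation as a resolvent identity and then read off a linear system on the marked-vertex amplitudes whose solvability is governed by $\det M^{\lambda}$. Concretely, from $H\ket{\lambda}=\lambda\ket{\lambda}$ and the definition of $H$ one gets
\begin{equation*}
-(\gamma A+\lambda I)\ket{\lambda}=\sum_{w\in W}\ket{w}\braket{w}{\lambda}.
\end{equation*}
Since $\lambda\notin\sigma(-\gamma A)$, the operator $\gamma A+\lambda I$ is invertible, and its spectral decomposition gives $(\gamma A+\lambda I)^{-1}=\sum_{\ell=0}^{k}(\lambda+\gamma\phi_{\ell})^{-1}P_{\ell}$. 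Taking the inner product of the displayed equation with $\bra{w'}$ after applying this resolvent then yields, for every $w'\in W$,
\begin{equation*}
\braket{w'}{\lambda}+\sum_{w\in W}\sum_{\ell=0}^{k}\frac{\bra{w'}P_{\ell}\ket{w}}{\lambda+\gamma\phi_{\ell}}\braket{w}{\lambda}=0,
\end{equation*}
i.e.\ $M^{\lambda}\mathbf{a}=0$ where $\mathbf{a}=(\braket{w}{\lambda})_{w\in W}$.

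For the forward direction, I would invoke Proposition~\ref{prop:lambda_in_sigma_A}: since $\lambda\notin\sigma(-\gamma A)$, the vector $\mathbf{a}$ cannot be identically zero (otherwise $\lambda\in\sigma(-\gamma A)$, contradicting the hypothesis). Hence $M^{\lambda}$ has a non-trivial kernel and $\det(M^{\lambda})=0$.

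For the converse, given a non-zero $\mathbf{a}=(a_w)_{w\in W}$ with $M^{\lambda}\mathbf{a}=0$, I would define the candidate eigenvector
\begin{equation*}
\ket{\lambda}:=-(\gamma A+\lambda I)^{-1}\sum_{w\in W}a_{w}\ket{w}=-\sum_{\ell=0}^{k}\frac{1}{\lambda+\gamma\phi_{\ell}}\sum_{w\in W}a_{w}P_{\ell}\ket{w},
\end{equation*}
and verify two things. First, a direct computation using the resolvent shows
\begin{equation*}
\braket{w'}{\lambda}=-\sum_{w\in W}\sum_{\ell=0}^{k}\frac{\bra{w'}P_{\ell}\ket{w}}{\lambda+\gamma\phi_{\ell}}a_{w}=a_{w'},
\end{equation*}
where the last equality uses $M^{\lambda}\mathbf{a}=0$. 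Second, applying $-(\gamma A+\lambda I)$ to $\ket{\lambda}$ reproduces $\sum_{w}a_{w}\ket{w}=\sum_{w}\ket{w}\braket{w}{\lambda}$, which together with the definition of $H$ rearranges to $H\ket{\lambda}=\lambda\ket{\lambda}$. Non-triviality of $\ket{\lambda}$ follows from the identity $\braket{w}{\lambda}=a_{w}$ and the fact that $\mathbf{a}\ne 0$.

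The only potentially delicate step is the converse: one must be careful to ensure that the candidate $\ket{\lambda}$ is non-zero and, more importantly, that its marked-vertex overlaps actually coincide with the null-vector entries $a_w$ (not merely satisfy the same linear relation). Both checks are handled cleanly by the resolvent identity together with the kernel condition, so the argument reduces essentially to bookkeeping with the spectral decomposition of $A$.
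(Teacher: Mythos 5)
Your proof is correct and follows essentially the same route as the paper: the resolvent identity $(\gamma A+\lambda I)^{-1}=\sum_{\ell}(\lambda+\gamma\phi_{\ell})^{-1}P_{\ell}$ you apply is exactly the paper's Eq.~\eqref{masterequation} (obtained there projector by projector), the forward direction invokes Proposition~\ref{prop:lambda_in_sigma_A} identically, and your candidate eigenvector in the converse coincides with the paper's $\sum_{\ell}\ket{\lambda_{\ell}}$. The only (minor) difference is that your converse is streamlined: establishing $\braket{w'}{\lambda}=a_{w'}$ first yields both the non-triviality of $\ket{\lambda}$ and the eigenvalue equation almost immediately, whereas the paper verifies $H\ket{\lambda}=\lambda\ket{\lambda}$ by a direct expansion and rules out $\ket{\lambda}=0$ by a separate contradiction argument.
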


\begin{proof}
($\Rightarrow$)
Assume that $\lambda$ is an eigenvalue of $H$ and $\ket{\lambda}$ is the corresponding eigenvector. From the definition of $H$, we have
\begin{equation}\label{before master equation}
    P_{\ell} H \ket{\lambda} = -P_{\ell} \left(\gamma A + \sum_{w \in W} \ket{w}\bra{w}\right) \ket{\lambda},
\end{equation}
and since $\lambda \notin \sigma(-\gamma A)$, we can rewrite the above equation as
\begin{equation}\label{masterequation}
    P_{\ell} \ket{\lambda} = -\frac{1}{\lambda + \gamma\phi_{\ell}} \sum_{w \in W} \braket{w}{\lambda} P_{\ell} \ket{w}.
\end{equation}
Given that $\sum_{\ell=0}^k P_{\ell} = I$, for a fixed marked vertex $w$, we obtain
$\braket{w}{\lambda} = \sum_{\ell=0}^k \bra{w} P_{\ell} \ket{\lambda}.$
Then, using Eq.~\eqref{masterequation}, we derive
\begin{equation*}
    \braket{w}{\lambda} = -\sum_{\ell=0}^k \frac{1}{\lambda + \gamma\phi_{\ell}} \left(\sum_{w' \in W} \braket{w'}{\lambda} \bra{w} P_{\ell} \ket{w'}\right),
\end{equation*}
which can be rewritten as
\begin{equation}\label{0-eigenvector}
    \sum_{w' \in W} M_{ww'}^{\lambda} \braket{w'}{\lambda} = 0,
\end{equation}
where $\big(M_{ww'}^{\lambda}\big)_{w, w' \in W}$ is as defined in Eq.~\eqref{eq:Mww}.
Since $\lambda\not\in\sigma(-\gamma A)$, the vector $\braket{w}{\lambda} \big\rvert_{w \in W}$ is nonzero by Proposition~\ref{prop:lambda_in_sigma_A}, and hence it is a $0$-eigenvector of $M^{\lambda}$.
%Since $\braket{w}{\lambda} \big\rvert_{w \in W}$ is a $0$-eigenvector of $M^{\lambda}$, it 
It follows that $\det\big(M^{\lambda}\big) = 0.$

\

($\Leftarrow$)
Assume $\lambda$ is a real number not in $\sigma(-\gamma A)$ such that $\det(M^\lambda) = 0$.
$M^\lambda$ has a 0-eigenvector, denoted as $u(w) \big\rvert_{w \in W}$. %for each $w \in W$.
%Define $\ket{\lambda}$ as a $|V|$-dimensional vector where $\braket{w}{\lambda} = u(w)$, in line with Eq.~\eqref{0-eigenvector}.
Construct $k+1$ vectors $\ket{\lambda_\ell}$, inspired by Eq.~\eqref{masterequation}, as follows:
\begin{equation*}
    \ket{\lambda_\ell} = -\frac{1}{\lambda + \gamma\phi_\ell} \sum_{w \in W} u(w) P_\ell \ket{w}.\label{plLambd}
%    \ket{\lambda_\ell} = -\frac{1}{\lambda + \gamma\phi_\ell} \sum_{w \in W} \braket{w}{\lambda} P_\ell \ket{w}.\label{plLambd}
\end{equation*}
Note that $P_{\ell}\ket{\lambda_{\ell'}}=\delta_{\ell \ell'}\ket{\lambda_\ell}$, or equivalently, $A\ket{\lambda_\ell}=\phi_{\ell}\ket{\lambda_\ell}$.
Our candidate for an eigenvector of $H$ associated with $\lambda$ is
\begin{equation}\label{eq:lambda_candidate}
    \ket{\lambda} = \sum_{\ell=0}^k \ket{\lambda_\ell}.
\end{equation}
First, we claim that $\ket{\lambda}$ is nonzero.
Indeed, if $\ket{\lambda}=0$, then we have $0=P_{\ell}\ket{\lambda}=\ket{\lambda_{\ell}}$ for all $\ell$, and hence
\begin{equation*}
    0=-\sum_{\ell=0}^k (\lambda+\gamma\phi_{\ell})\ket{\lambda_{\ell}} = \sum_{\ell=0}^k \sum_{w \in W} u(w) P_\ell \ket{w} = \sum_{w \in W} u(w) \ket{w}
\end{equation*}
by $\sum_{\ell=0}^k P_{\ell} = I$, but this is impossible because $u(w) \big\rvert_{w \in W}$ is nonzero.
%After using Eq.~\eqref{eq:Mww} and the fact that $u(w) \big\rvert_{w \in W}$ is a $0$-eigenvector of $M^{\lambda}$,
%Eqs.~\eqref{eq:Mww} and~\eqref{0-eigenvector}, 
%we verify that the choice~\eqref{eq:lambda_candidate} is consistent with $\braket{w}{\lambda}=u(w)$.
We next claim that $H\ket{\lambda} = \lambda\ket{\lambda}$. Indeed, using the definitions of $H$ and $\ket{\lambda_\ell}$, we have
\begin{align*}
    H\ket{\lambda} &= \sum_{\ell=0}^k \frac{\gamma\phi_\ell}{\lambda + \gamma \phi_\ell} \sum_{w \in W} u(w) P_\ell \ket{w} + \\
    &\quad \sum_{w, w' \in W} \left(\sum_{\ell=0}^k \frac{\bra{w} P_\ell \ket{w'}}{\lambda + \gamma\phi_\ell}\right) u(w') \ket{w}.
    %H\ket{\lambda} &= \sum_{\ell=0}^k \frac{\gamma\phi_\ell}{\lambda + \gamma \phi_\ell} \sum_{w \in W} \braket{w}{\lambda} P_\ell \ket{w} + \\
    %&\quad \sum_{w, w' \in W} \left(\sum_{\ell=0}^k \frac{\bra{w'} P_\ell \ket{w}}{\lambda + \gamma\phi_\ell}\right) \braket{w'}{\lambda} \ket{w}.
\end{align*}
The expression in parentheses becomes $M_{ww'}^{\lambda} - \delta_{ww'}$ by Eq.~\eqref{eq:Mww}.
Since $u(w) \big\rvert_{w \in W}$ is a $0$-eigenvector of $M^{\lambda}$,
%Eq.~\eqref{0-eigenvector}, 
we find
\begin{align*}
    H\ket{\lambda} = \sum_{\ell=0}^k \frac{\gamma\phi_\ell}{\lambda + \gamma \phi_\ell} \sum_{w \in W} u(w) P_\ell \ket{w} - 
     \sum_{w \in W} u(w) \ket{w}.
%    H\ket{\lambda} = \sum_{\ell=0}^k \frac{\gamma\phi_\ell}{\lambda + \gamma \phi_\ell} \sum_{w \in W} \braket{w}{\lambda} P_\ell \ket{w} - 
%     \sum_{w \in W} \braket{w}{\lambda} \ket{w}.
\end{align*}
Applying $\sum_{\ell=0}^k P_\ell$ on the second term of the right-hand side, the equation simplifies to
\begin{align*}
    H\ket{\lambda} &= \lambda \left(-\sum_{\ell=0}^k \frac{1}{\lambda + \gamma \phi_\ell} \sum_{w \in W} u(w) P_\ell \ket{w}\right),
%    H\ket{\lambda} &= \lambda \left(-\sum_{\ell=0}^k \frac{1}{\lambda + \gamma \phi_\ell} \sum_{w \in W} \braket{w}{\lambda} P_\ell \ket{w}\right),
\end{align*}
which concludes the proof, considering Eq.~\eqref{eq:lambda_candidate}.
\end{proof}

Given a graph \(G(V,E)\) with a set \(W\) of marked vertices, the proof of Proposition~\ref{prop:detM=0} outlines a method for determining the eigenvalues \(\lambda \notin \sigma(-\gamma A)\) of the Hamiltonian~\eqref{eq:hamiltonian}, assuming we know the eigenvalues and eigenprojections of the adjacency matrix. This method involves solving the equation \(\det(M^{\lambda}) = 0\) for an unknown real number \(\lambda\). Note that the numerator of \(\det(M^{\lambda})\) is a polynomial \(P(\lambda)\) of degree at most \((k+1)|W|\) in the indeterminate \(\lambda\). There can be at most \((k+1)|W|\) roots of \(P(\lambda) = 0\), some of which may lie in \(\sigma(-\gamma A)\). Additionally, since \(A\) is a real symmetric matrix, \(M^{\lambda}\) is also real symmetric and thus orthogonally diagonalizable.

The proof of Proposition~\ref{prop:detM=0} also shows how to identify a basis for the eigenspace associated with $\lambda$. The multiplicity of $\lambda$ is equal to the multiplicity of the zero eigenvalue of $M^\lambda$. Given an orthogonal basis of $0$-eigenvectors of $M^\lambda$, and applying Eq.~\eqref{eq:lambda_candidate}, we can derive an orthogonal basis for the $\lambda$-eigenspace of $H$. Consequently, as a corollary, the multiplicity of $\lambda$ is constrained by the upper bound of $|W|$.

Given that the initial state is $\ket{\phi_0}$, an eigenvector $\ket{\lambda}$ such that $\braket{\lambda}{\phi_0}= 0$ does not affect the quantum walk dynamics. The criterion to ensure that $\braket{\lambda}{\phi_0}\neq 0$ when eigenvector $\ket{\lambda}$ is obtained using Eq.~\eqref{eq:lambda_candidate} is
\begin{equation}\label{eq:cond_psi0}
\sum_{w\in W} u(w)\braket{\phi_0}{w}\neq 0.
\end{equation}
In the case of a regular graph $G(V,E)$, it is necessary to assert that $\sum_{w\in W} u(w)\neq 0$. 

\begin{prop}\label{from PF}
Let $\lambda^-$ be the smallest eigenvalue of $H$ and $\ket{\lambda^-}$ be an associated eigenvector. Then, $\lambda^-$ is simple, $\braket{\lambda^-}{\phi_0}\neq 0$, and $\lambda^-< -\gamma \phi_0$ for any $\gamma>0$.    
\end{prop}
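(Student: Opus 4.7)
The plan is to apply the Perron--Frobenius theorem to the matrix $-H$. In the computational basis,
\begin{equation*}
-H = \gamma A + \sum_{w\in W}\ket{w}\bra{w}
\end{equation*}
has entrywise nonnegative entries, since $\gamma>0$, $A$ is a $\{0,1\}$-adjacency matrix, and the second term is a nonnegative diagonal. Because $G$ is connected, the off-diagonal sparsity pattern of $-H$ agrees with that of $A$ and is strongly connected, so $-H$ is irreducible. The Perron--Frobenius theorem then yields that the largest eigenvalue of $-H$, which equals $-\lambda^-$, is simple and possesses a strictly positive eigenvector. This immediately gives that $\lambda^-$ is simple and that, after fixing a suitable overall sign, the entries of $\ket{\lambda^-}$ are all strictly positive in the computational basis.

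For the overlap with the initial state, I would apply Perron--Frobenius once more to $A$ itself, whose largest eigenvalue $\phi_0$ is simple and whose corresponding eigenvector, after normalization, is $\ket{\phi_0}$ with strictly positive entries. Consequently $\braket{\lambda^-}{\phi_0}$ is a sum of strictly positive terms, hence strictly positive, which yields the second claim.

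For the strict inequality $\lambda^-<-\gamma\phi_0$, I would use the variational (Rayleigh--Ritz) principle with $\ket{\phi_0}$ as a trial vector. Since $A\ket{\phi_0}=\phi_0\ket{\phi_0}$,
\begin{equation*}
\bra{\phi_0}H\ket{\phi_0} = -\gamma\phi_0 - \sum_{w\in W}|\braket{w}{\phi_0}|^2,
\end{equation*}
and assuming $W\neq\varnothing$, positivity of every entry of $\ket{\phi_0}$ makes each term $|\braket{w}{\phi_0}|^2$ strictly positive. Therefore $\lambda^-\leqslant\bra{\phi_0}H\ket{\phi_0}<-\gamma\phi_0$, as required.

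The only step that requires some care is verifying the hypotheses of Perron--Frobenius for $-H$, namely entrywise nonnegativity and irreducibility; once those are in hand, the remaining conclusions follow essentially for free from the positivity of the Perron eigenvectors and from the Rayleigh--Ritz bound, and I do not anticipate any serious technical obstacle.
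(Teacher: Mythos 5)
Your proof is correct and its core is the same as the paper's: both apply the Perron--Frobenius theorem to the nonnegative irreducible matrix $-H$ to get simplicity of $\lambda^-$ and a sign-definite eigenvector, whence $\braket{\lambda^-}{\phi_0}\neq 0$. The only real divergence is in the last inequality. The paper obtains $\lambda^-<-\gamma\phi_0$ from strict monotonicity of the Perron root: since $-H=\gamma A+\sum_{w\in W}\ket{w}\bra{w}$ entrywise dominates the irreducible matrix $\gamma A$ and the difference is nonzero, the spectral radius strictly increases. You instead use the Rayleigh--Ritz principle with the trial vector $\ket{\phi_0}$, computing $\bra{\phi_0}H\ket{\phi_0}=-\gamma\phi_0-\sum_{w\in W}|\braket{w}{\phi_0}|^2$ and invoking strict positivity of the Perron vector of $A$ to make the correction term strictly negative. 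Both are standard one-line lemmas; your variational route has the minor advantage of not needing the comparison theorem for spectral radii (only that $\braket{w}{\phi_0}\neq 0$ for some $w\in W$), while the paper's route avoids introducing $\ket{\phi_0}$ into that step at all. Either way the argument is complete and all hypotheses (nonnegativity, irreducibility from connectedness of $G$, $W\neq\varnothing$) are verified.
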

\begin{proof}
Since $-H$ is irreducible and nonnegative, by the Perron--Frobenius theorem (see, e.g.,~\cite{BCN1989B,BH2012B,HJ13}), $\lambda^-$ is simple and the entries of $\ket{\lambda^-}$ are all nonnegative or all nonpositive.
In particular, $\braket{\lambda^-}{\phi_0}\neq 0$.
Moreover, since $\gamma A$ and $-H-\gamma A=\sum_{w\in W}\ket{w}\bra{w}$ are both nonnegative and the latter is nonzero, we have $\lambda^-<-\gamma\phi_0$ for any $\gamma>0$.
\end{proof}

Let $\lambda^+$ denote the eigenvalue of $H$ that is the $(|W|+1)$-smallest, considering the multiplicity of each eigenvalue. If $\lambda^+$ is in the spectrum $\sigma(-\gamma A)$, or if the eigenvectors $\ket{\lambda^+}$ corresponding to $\lambda^+$ satisfy $\braket{\lambda^+}{\phi_0}=0$ for $\gamma>0$, meaning that $\ket{\lambda^+}$ is derived from a $0$-eigenvector of $M^\lambda$ fulfilling condition~\eqref{eq:cond_psi0}, then we must select the next smallest eigenvalue. We claim that there exists a specific value of $\gamma$ for which the eigenvalue $-\gamma\phi_0$ of $-\gamma A$ is positioned midway between $\lambda^-$ and $\lambda^+$. Our approach uses $\lambda^\pm$ and this $\gamma$ in the asymptotic limit. Essentially, for any graph $G(V,E)$ we claim the existence of a $\gamma$ value satisfying
\begin{equation*}%\label{midpoint}
\lambda^{\pm} = -\gamma\phi_0 \pm \epsilon,
\end{equation*}
where $|\lambda^- + \gamma\phi_0|=|\lambda^+ + \gamma\phi_0|=\epsilon$.

Before we present the proof of this claim, let us first offer some intuitive insights into its validity. Considering $H$ as defined in Eq.~\eqref{eq:hamiltonian}, when $\gamma=0$, the eigenvalues of $H$ are $-1$ (with multiplicity $|W|$) and $0$ (with multiplicity $N-|W|$), where $N$ is the dimension of the Hilbert space. For small, nonzero $\gamma$, the eigenvalue of $-1$ splits into $|W|$ eigenvalues (which are not necessarily distinct) corresponding to different eigenvectors. Note that, based on a dimensionality analysis, at least one of these $|W|$ eigenvectors $\ket{\lambda}$ satisfies $\braket{\lambda}{\phi_0} \neq 0$, aligned with $\braket{\lambda^-}{\phi_0} \neq 0$. In the case when $|W|\ge 2$, at least two eigenvalues will be different because $\lambda^-$, the smallest eigenvalue, is simple. As previously described, $\lambda^+$ represents the $(|W|+1)$-th eigenvalue of $H$, considering the multiplicity of each eigenvalue. Consequently, $\lambda^+$ is selected as the smallest eigenvalue of $H$ that approaches 0 when $\gamma$ tends to 0. Due to the stipulation in constraint~\eqref{eq:cond_psi0} and the condition $\lambda^+\notin \sigma(-\gamma A)$, there are instances where $\lambda^+$ needs to be the $j$-th eigenvalue, where $j$ exceeds $|W|+1$. It is essential to avoid any eigenvalues that converge towards $-1$ as $\gamma$ decreases when choosing $\lambda^+$ because, in such scenarios, the gap between $\lambda^-$ and $\lambda^+$ becomes too small.

When $\gamma$ is small, the eigenvalue $-\gamma \phi_0$ of $-\gamma A$ is near zero, placing it close to $\lambda^+$ on the real line spectrum. Conversely, as $\gamma$ increases, $-\gamma\phi_0$ becomes increasingly aligned with $\lambda^-$, indicating an one-to-one correlation between the eigenvalues of $-\gamma A$ and $H$. We have established that for small $\gamma$, the eigenvalue $-\gamma \phi_0$ is positioned near the right of the range $[\lambda^-,\lambda^+]$. As $\gamma$ increases, this eigenvalue shifts leftward, eventually nearing the left end of this range. Given that the determinants $|H-\lambda I|$ and $|-\gamma A-\phi I|$ are continuous real functions of $\gamma$, by applying the intermediate value theorem, we deduce that a specific $\gamma$ exists where $-\gamma \phi_0$ is exactly at the midpoint of $[\lambda^-,\lambda^+]$.

\begin{prop}\label{prop:optimal-gamma}
Let $\lambda^-$ be the smallest eigenvalues of $H$. Let $\lambda^+$ be an eigenvalue of $H$ such that $\lambda^+> \lambda^-$. Then, there exists $\gamma>0$ such that 
 \[-\gamma\phi_0 = \frac{\lambda^-+\lambda^+}{2},\]
where $\phi_0$ is the largest eigenvalue of $A$.
\end{prop}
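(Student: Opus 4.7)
The plan is to treat $\lambda^-$ and $\lambda^+$ as functions of $\gamma$ and apply the intermediate value theorem. For each $\gamma\geq 0$, let $\lambda^-(\gamma)=\lambda_1(H(\gamma))$ be the smallest eigenvalue of $H(\gamma)$ and, interpreting the $\lambda^+$ in the statement as the $j$-th smallest eigenvalue of $H(\gamma)$ (counted with multiplicity) for some fixed $j\geq 2$, set $\lambda^+(\gamma)=\lambda_j(H(\gamma))$. Because $H(\gamma)=-\gamma A-\sum_{w\in W}\ket{w}\bra{w}$ depends linearly on $\gamma$ and is Hermitian, Weyl's inequality gives $|\lambda_j(H(\gamma_1))-\lambda_j(H(\gamma_2))|\leq |\gamma_1-\gamma_2|\,\|A\|$, so each ordered eigenvalue is Lipschitz in $\gamma$. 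Consequently
\[
f(\gamma)=\lambda^-(\gamma)+\lambda^+(\gamma)+2\gamma\phi_0
\]
is continuous on $[0,\infty)$, and the conclusion of the proposition is equivalent to the existence of a zero of $f$ in $(0,\infty)$.

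The next step is a sign computation at the two ends. At $\gamma=0$ the Hamiltonian reduces to $H(0)=-\sum_{w\in W}\ket{w}\bra{w}$, whose spectrum consists of $-1$ with multiplicity $|W|$ and $0$ with multiplicity $N-|W|$, so $\lambda^-(0)=-1$, $\lambda^+(0)\leq 0$, and therefore $f(0)\leq -1<0$. For the opposite end I would compare $H(\gamma)$ with $-\gamma A$ through Weyl's monotonicity: writing $H(\gamma)=-\gamma A+B$ with $B=-\sum_{w\in W}\ket{w}\bra{w}$ and observing that $-I\preceq B\preceq 0$, one obtains the two-sided bound $\lambda_j(-\gamma A)-1\leq \lambda_j(H(\gamma))\leq \lambda_j(-\gamma A)$ for every $j$. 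Specialised to $j=1$ this yields $\lambda^-(\gamma)+\gamma\phi_0\geq -1$. Specialised to $j\geq 2$, and combined with the fact that $G$ is connected so that $\phi_0$ is simple by Perron--Frobenius, it yields $\lambda^+(\gamma)+\gamma\phi_0\geq \gamma(\phi_0-\phi_1)-1$. Adding the two estimates,
\[
f(\gamma)\geq \gamma(\phi_0-\phi_1)-2\longrightarrow +\infty,
\]
so $f(\gamma)>0$ for all sufficiently large $\gamma$.

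A zero $\gamma^*>0$ of $f$ is then produced by the intermediate value theorem, and $f(\gamma^*)=0$ rearranges to $-\gamma^*\phi_0=(\lambda^-(\gamma^*)+\lambda^+(\gamma^*))/2$, as claimed. The main obstacle is really only conceptual rather than computational: one has to be comfortable tracking $\lambda^\pm$ as ordered eigenvalues of the family $H(\gamma)$ and check that this tracking is continuous even through possible level crossings, which is exactly the Lipschitz estimate recalled above. Everything else reduces to the two elementary Weyl comparisons that sandwich $\lambda_j(H(\gamma))$ between $\lambda_j(-\gamma A)-1$ and $\lambda_j(-\gamma A)$.
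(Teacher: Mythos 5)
Your proof is correct, and its skeleton is the same as the paper's: both apply the intermediate value theorem to the continuous function $\lambda^-(\gamma)+\lambda^+(\gamma)+2\gamma\phi_0$ (tracking $\lambda^\pm$ as ordered eigenvalues, with continuity from Weyl's perturbation theorem) and both get the negative sign at $\gamma=0$ from the spectrum $\{-1,0\}$ of $-\sum_{w\in W}\ket{w}\bra{w}$. The one place you genuinely diverge is the other end of the interval. The paper compactifies the parameter via $\tau=\gamma/(1+\gamma)$, so that the rescaled operator $-\tau A-(1-\tau)\sum_{w\in W}\ket{w}\bra{w}$ makes sense at $\tau=1$, where it equals $-A$; this gives the exact endpoint values $\mu^-(1)=-\phi_0$ and $\mu^+(1)=-\phi_\ell$ and hence $h(1)=\phi_0-\phi_\ell>0$. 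You instead stay on $[0,\infty)$ and sandwich $\lambda_j(H(\gamma))$ between $\lambda_j(-\gamma A)-1$ and $\lambda_j(-\gamma A)$ by Weyl monotonicity, obtaining $f(\gamma)\geq\gamma(\phi_0-\phi_1)-2\to+\infty$. Your route is slightly more elementary (no rescaling, no need to identify the limiting eigenvalues exactly, only the simplicity of $\phi_0$), at the cost of producing a divergence estimate rather than a clean limit; either suffices for the intermediate value theorem, so both arguments are complete.
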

\begin{proof} 
For convenience, set $\tau=\gamma/(1+\gamma)$.
Note that $\tau\in(0,1)$.
Also, let $\mu^{\pm}(\tau)=\lambda^{\pm}(\gamma)/(1+\gamma)$, so that they are eigenvalues of
%In other words, $\mu^-(\tau)$ and $\mu^+(\tau)$ are respectively the smallest and the second smallest eigenvalues of
\begin{equation*}
   (1-\tau)H(\gamma)= -\tau A-(1-\tau)\sum_{w\in W}\ket{w}\bra{w}.
\end{equation*}
Then, observe that the values $\mu^{\pm}(\tau)$ also make sense for $\tau=0,1$.
In particular, %namely,
\begin{equation*}
    \mu^-(0)=-1, \qquad \mu^+(0)\leqslant 0, \qquad \mu^-(1)=-\phi_0, \qquad \mu^+(1)=-\phi_{\ell}
\end{equation*}
for some $\ell\geqslant 1$, where $\phi_{\ell}$ is the $(\ell+1)^{\mathrm{st}}$ largest eigenvalue of $A$.
%\begin{equation*}
 %   \mu^-(0)=-1, \qquad \mu^+(0)=\begin{cases} 0 & \text{if}\ |W|=1, \\ -1 & \text{if} \ |W|>1, \end{cases} \qquad \mu^-(1)=-\phi_0, \qquad \mu^+(1)=-\phi_1,
%\end{equation*}
%where $\phi_1$ is the second largest eigenvalue of $A$.
We have used that $\phi_0$ is a simple eigenvalue of $A$ because $G(V,E)$ is connected.
Moreover, note that $\mu^{\pm}(\tau)$ are continuous in $\tau$ using Weyl's Perturbation Theorem~\cite{Bha97}.
Now, consider the continuous function \begin{equation*}
    h(\tau)=\mu^-(\tau)+\mu^+(\tau)+2\phi_0\tau.
\end{equation*}
We have $h(0)\leqslant -1<0$ and $h(1)=\phi_0-\phi_{\ell}>0$.
Hence, by the intermediate value theorem, there exists $\tau\in(0,1)$ such that $h(\tau)=0$.
The proof is complete after converting back to variable $\gamma=\tau/(1-\tau)$.
\end{proof}

To illustrate an eigenvalue $\lambda$ to avoid because it is in $\sigma(-\gamma A)$, consider the complete bipartite graph $K_{n,n}$ with $n$ marked vertices in one part. The eigenvalues of $H$ are ordered as $\lambda^-<-1<0<\lambda^+$. The second smallest eigenvalue is $-1$ with a multiplicity of $n-1$. The next eigenvalue, $0$, belongs to $\sigma(-\gamma A)$, making $\lambda^+$ the $(n+2)$-th and final eigenvalue of $H$.

%By the way, this observation appears to be true for all graphs where $H$ has the eigenvalue $-1$. In fact, we conjecture that if $H\ket{\lambda}=-\ket{\lambda}$, then $\ket{\lambda}$ is a 0-eigenvector of $A$. 

%The only graphs for which the second smallest eigenvalue $\phi_2$ is zero are the complete $k$-partite graphs.

\subsection{Asymptotic analysis}\label{asymptotic}

Let us consider a graph class characterized by some parameters that determine the number of vertices $N$. Usually, the spectral gap of the Hamiltonian tends to zero when the number of vertices $N$ increases. In this case, the analysis of the search algorithm is simpler in the asymptotic regime.

In cases where one vertex is marked, several studies have employed a technique to analyze quantum-walk-based search algorithms. This technique uses two key eigenvectors of the evolution operator, ensuring significant overlap between both the initial and marked states and the space spanned by these eigenvectors~\cite{CG_04,AKR05,Gro97}. In this work, we extend and generalize this technique to cases with multiple marked vertices. The asymptotic computational complexity of the search algorithm primarily depends on the eigenvalues $\lambda^{\pm}$, where $\lambda^-$ is the smallest eigenvalue of $H$, and $\lambda^+$ is the $(|W|+1)$-smallest, taking into account the multiplicity of each eigenvalue. Besides, $\lambda^+$ must satisfy the conditions $\lambda^+\notin\sigma(-\gamma A)$ and $\braket{\lambda^+}{\phi_0}\neq 0$ for $\gamma>0$. The value of $\gamma$ is described in Proposition~\ref{prop:optimal-gamma}, which positions $-\gamma\phi_0$ in the middle of the range $[\lambda^-,\lambda^+]$. Since calculating this optimal $\gamma$ might be too difficult, we describe an alternative way to obtain $\gamma$, which asymptotically approaches the optimal one. Thus, we can write
\begin{equation}\label{eq:lambdapm}
\lambda^{\pm} = -\gamma\phi_0 \pm \epsilon + o(\epsilon). %O\big(\epsilon^2\big).
\end{equation}
In this case, the gap between $\lambda^-$ and $\lambda^+$ is asymptotically $2\epsilon>0$. Note that $2\epsilon$ is larger than the spectral gap of $H$ unless $|W|=1$. We will detail a method to calculate $\gamma$ needed in Eq.~\eqref{eq:lambdapm} below.

Using Eq.~\eqref{eq:Mww}, matrices $M^{\lambda^\pm}$ are given by
\begin{equation*}
    M_{ww'}^{\lambda^{\pm}} = \delta_{ww'} +\sum_{\ell=0}^k \frac{\bra{w}P_{\ell}\ket{w'}}{\gamma(\phi_{\ell}-\phi_0)\pm {\epsilon} + o(\epsilon)}.
\end{equation*}
Up to second order in $\epsilon$ we obtain
\begin{equation}\label{eq:Mapprox}
    M_{ww'}^{\lambda^{\pm}} = \pm\frac{\bra{w}P_0\ket{w'}}{\epsilon} + \delta_{ww'}  - \frac{S^{(1)}_{ww'}}{\gamma} \mp \frac{\epsilon S^{(2)}_{ww'}}{\gamma^2}+O(\epsilon^2),
\end{equation}
where
\begin{equation*}
    S^{(1)}_{ww'} = \sum_{\ell=1}^k \frac{\bra{w}P_{\ell}\ket{w'}}{\phi_0-\phi_{\ell}}
\end{equation*}
and 
\begin{equation*}
    S^{(2)}_{ww'} = \sum_{\ell=1}^k \frac{\bra{w}P_{\ell}\ket{w'}}{(\phi_0-\phi_{\ell})^2}.
\end{equation*}
Note that the sums $S^{(1)}_{ww'}$ and $S^{(1)}_{ww'}$, as well as $\bra{w}P_0\ket{w'}$, can be calculated if we know the eigenvalues and eigenprojections of $A$. The unknown quantities in Eq.~\eqref{eq:Mapprox} are $\gamma$ and $\epsilon$.

One of the goals of the asymptotic analysis is to determine analytical values of $\gamma$ and $\epsilon$. By using Eq.~\eqref{eq:Mapprox}, the condition $\det(M^{\lambda^\pm})=0$, and considering the numerator up to the order of $\epsilon^2$, we derive a quadratic equation in the form $\epsilon^2+a(\gamma)\epsilon-b(\gamma)=O(\epsilon^3)$. This equation cannot contain the linear term, as this is the only way to achieve the $\pm \epsilon$ term in Eq.~\eqref{eq:lambdapm}. Consequently, we use the equation $a(\gamma)=0$ to ascertain the value of $\gamma$, which is asymptotically optimal. Once $\gamma$ is computed, $\epsilon$ can be determined as $\sqrt{b(\gamma)}$.

Proposition~\ref{prop:optimal-gamma} ensures that Eq.~\eqref{eq:lambdapm} is applicable to any graph $G(V,E)$. Furthermore, this asymptotic analysis is valid for any graph such that $\epsilon\ll 1$ or for any graph class where the gap between $\lambda^-$ and $\lambda^+$ approaches 0 as $N$ tends toward infinity. An example where the gap between $\lambda^-$ and $\lambda^+$ does not approach zero is the class of complete bipartite graphs $K_{n,n}$ with $n$ marked vertices in one part. In this case, using that $\phi_0=n$, the optimal parameters are $\lambda^\pm=(-1\pm\sqrt{2})/2$ and $\gamma=1/2n$. This yields a fixed $\epsilon=1/\sqrt{2}$. Note that the behavior of the gap depends on the placement of the marked vertices. In the same graph $K_{n,n}$, if the marked vertices are arranged in a different configuration than the one above, the gap approaches zero.

\subsection{Calculation of the computational complexity}\label{subssec:compcomplexity}

Using an orthonormal basis ${\ket{\lambda}}$ of eigenvectors of $H$, we can express the probability of finding a marked vertex as a function of time $t$ (refer to Eq.~\eqref{eq:p(t)}) as follows:
\begin{equation*}
p_\text{exact}(t) = \sum_{w\in W}\Big| \sum_\lambda \mathrm{e}^{-\mathrm{i}\lambda t} \braket{w}{\lambda}\braket{\lambda}{\psi(0)}\Big|^2.
\end{equation*}
The exact eigenvalues $\lambda$ correspond to the roots of $\det(M^\lambda)=0$. However, calculating these roots and the exact expression for $p_\text{exact}(t)$ often proves too complex. In the asymptotic regime, where $N \rightarrow \infty$ and $\epsilon \rightarrow 0$, it is preferable to use Eq.~\eqref{eq:Mapprox} as long as the computational complexity is determined by $\lambda^\pm$. Matrices $M^{\lambda^{\pm}}$ are simpler to compute as their entries depend only on two sums and the term $\bra{w}P_{0}\ket{w'}$. To calculate an approximate $p_\text{approx}(t)$ in this regime, we focus on the dominant terms:
\begin{equation}\label{eq:p(t)-2terms}
p_\text{approx}(t) = \sum_{w\in W}\left| \mathrm{e}^{\mathrm{i}\epsilon t} \braket{w}{\lambda^-}\braket{\lambda^-}{\psi(0)}+ \mathrm{e}^{-\mathrm{i}\epsilon t} \braket{w}{\lambda^+}\braket{\lambda^+}{\psi(0)}\right|^2.
\end{equation}
$\ket{\lambda^-}$ is a normalized eigenvector of $H$ associated with the smallest root $\lambda^-$ of $\det(M^{\lambda}) = 0$ and
$\ket{\lambda^+}$ with the $(|W|+1)$-smallest root $\lambda^+$, subjected to the restrictions $\lambda^+\notin\sigma(-\gamma A)$ and $\braket{\lambda^+}{\phi_0}\neq 0$. 
Eigenvalue $\lambda^-$ is always simple but if eigenvalue $\lambda^+$ is not simple, which is uncommon, we must replace $\ket{\lambda^+}\bra{\lambda^+}$ with the eigenprojection $P^+ = \sum \ket{\lambda^+}\bra{\lambda^+}$ in equation~\eqref{eq:p(t)-2terms}, where the sum runs over an orthonormal basis of the $\lambda^+$-eigenspace. The multiplicity of $\lambda^+$ is at most $|W|$.
%where $\epsilon = |\lambda^- + \gamma\phi_0|$. By definition, $\lambda^-$ is the smallest root of $\det(M^{\lambda}) = 0$, with $M^\lambda$ as given in Eq.~\eqref{eq:Mww}.  But instead of calculating $\lambda^-$, we use the results of Sec.~\ref{asymptotic} to calculate $\epsilon$ and $\gamma$. 

Note that $p_\text{exact}(t)\ge p_\text{approx}(t)$. The terms of $p_\text{approx}(t)$ are calculated as follows. The value of $\gamma$ is derived from the equation $a(\gamma) = 0$ and $\epsilon$ from $\sqrt{b(\gamma)}$, as described in Sec.~\ref{asymptotic}. Now, let us turn our attention to calculating $\big|\braket{w}{\lambda^+}\big|$. The approach for calculating $\big|\braket{w}{\lambda^-}\big|$ follows a similar pattern. Remember that $\braket{w}{\lambda^+}\bigr\rvert_{w\in W}$ is a $0$-eigenvector of $M^{\lambda^+}$, with entries $\braket{w}{\lambda^+}$ for each $w\in W$. Given that $M^{\lambda^+}$ is already determined, we can identify a normalized $0$-eigenvector of $M^{\lambda^+}$, which has entries denoted as $u(w)$ for each $w\in W$. In cases where $\lambda^+$ is not a simple eigenvalue, it becomes necessary to establish an orthonormal basis for the kernel of $M^{\lambda^+}$. Following this,
\begin{equation}\label{wlam}
    \braket{w}{\lambda^+}={c^+}\, u(w),
\end{equation}
%Then vectors $\braket{w}{\lambda^+}\bigr\rvert_{w\in W}$ and $\ket{u}$ satisfy 
%\begin{equation}\label{wlam}
%    \braket{w}{\lambda^+}\bigr\rvert_{w\in W} = {c^+} \ket{u},
%\end{equation}
where ${c^+}$ is a complex number. Using $\sum_{\ell} \left\| P_{\ell} \ket{\lambda^+} \right\|^2=1$
%\begin{equation*}
%    \sum_{\ell=0}^k \left\| P_{\ell} \ket{\lambda^+} \right\|^2=1,
%\end{equation*}
%$\braket{w}{\lambda^+}={c^+}\, u(w)$ \big($u(w)$ is the entry of $\ket{u}$ associated with $w$\big), 
and Eq.~\eqref{masterequation}, we obtain 
\begin{equation*}
    \frac{1}{\left|{c^+}\right|^2} = \sum_{\ell=0}^k \frac{1}{\left|\lambda^++\gamma\phi_{\ell}\right|^2}\,\Big\|\sum_{w\in W} u(w) P_{\ell}\ket{w}\Big\|^2.
\end{equation*}
The dominant term is
\begin{equation*}%\label{eq:main1}
    \frac{1}{\left|{c^+}\right|} =  \frac{1}{\epsilon}\,\Big\|\sum_{w\in W} u(w) P_{0}\ket{w}\Big\|+O(1).
\end{equation*}
To calculate $\braket{w}{\lambda^+}$ instead of its absolute value, we still have to find ${c^+}$ instead of $\left|{c^+}\right|$. This is easily solved by rescaling $\ket{\lambda^+}$ with a unit complex number $\ket{\lambda^+}\rightarrow \mathrm{e}^{\mathrm{i}\theta}\ket{\lambda^+}$ so that ${c^+} \mathrm{e}^{-\mathrm{i}\theta}$ is a positive real number. After rescaling $\ket{\lambda^+}$, we have $\braket{w}{\lambda^+}=|c^+|\,u(w)$.

The last terms of $p_\text{approx}(t)$ we need to determine are $\big|\braket{\lambda^\pm}{\psi(0)}\big|$. We define $\ket{\psi(0)}$ as a normalized eigenvector of $-\gamma A$ corresponding to the eigenvalue $-\gamma \phi_0$. In the special case where the graph $G(V,E)$ is regular, $\ket{\psi(0)}$ is the uniform superposition and thus becomes an eigenvector with the eigenvalue $-\gamma d$, where $d$ is the degree of the graph. Setting $\ell=0$ and applying Eq~\eqref{masterequation}, we derive
\begin{equation}\label{eq:main2}
\braket{\psi(0)}{\lambda^\pm} =\mp  \frac{1}{\epsilon}\sum_{w\in W}\braket{\psi(0)}{w}\braket{w}{\lambda^\pm}.
\end{equation}
Given that all terms on the right-hand side are assumed to be known, we now have a method to calculate $\braket{\lambda^\pm}{\psi(0)}$.

Note that we have calculated all terms of $p_\text{approx}(t)$ as given by Eq.~\eqref{eq:p(t)-2terms}. These terms are obtained from Eqs.~\eqref{wlam} and~\eqref{eq:main2}, while $\epsilon$ and $\gamma$ are calculated using the method described in Sec.~\ref{asymptotic}.  The framework described here avoids the need to calculate full expressions for the eigenvectors $\ket{\lambda^\pm}$. To determine the computational complexity of the search algorithm, we still need to identify the optimal running time, which is the value of $t_\text{opt}$ that corresponds to the first peak of $p_\text{approx}(t)$, and the corresponding success probability, which is $p_\text{approx}(t_\text{opt})$.

The effectiveness of this framework hinges on the substantial overlap of both the initial state and the marked states with the space spanned by $\ket{\lambda^\pm}$. To analyze this, we can set a small upper bound for the terms not included in Eq.~\eqref{eq:p(t)-2terms} concerning the initial condition. These missing terms are upper-bounded by
\begin{align*}
\sum_{\lambda\neq \lambda^\pm}\big|\braket{\psi(0)}{\lambda} \big|^2 \leqslant \frac{1}{\left(\lambda^\circ+\gamma\phi_0\right)^2}\sum_{\lambda\neq \lambda^\pm}\left(\lambda+\gamma\phi_0\right)^2\big|\braket{\psi(0)}{\lambda}\big|^2,
\end{align*}
where $\lambda^\circ$ represents the eigenvalue of $H$ immediately following $\lambda^-$. Using Eq.~\eqref{masterequation} and taking $\ell=0$, we find $\left(\lambda+\gamma\phi_0\right)\braket{\psi(0)}{\lambda}=-\sum_w \braket{\psi(0)}{w}\braket{w}{\lambda}$. This leads to the upper bound
\begin{align*}
\sum_{\lambda\neq \lambda^\pm}\big|\braket{\psi(0)}{\lambda} \big|^2 \leqslant \frac{|W|+2}{\left(\lambda^\circ+\gamma\phi_0\right)^2} \sum_{w,w'\in W}\bra{w}P_0\ket{w'}.
\end{align*}
Typically, the right-hand side tends to decrease. For example, in regular graphs, the summation on the right-hand side simplifies to $|W|^2/N$. If the gap between the eigenvalue $\lambda^\circ$ and $-\gamma\phi_0$ is sufficiently large, such that it diminishes more slowly than $1/\sqrt{N}$, then the overlap $|\braket{\psi(0)}{\lambda}|^2$ for the excluded eigenvectors in Eq.~\eqref{eq:p(t)-2terms} approaches zero. If the error approaches zero asymptotically, we can state that $p_\text{exact}(t)=p_\text{approx}(t)+o(1)$ and the framework described in this work is successful.

There is an interesting case where the success probability $p_\text{approx}(t)$ is the square of a sinusoidal function. This occurs when the two terms within the summation of Eq.~\eqref{eq:p(t)-2terms} fulfill the following condition
\begin{equation}\label{bra_lam_w}
\braket{\lambda^+}{\psi(0)}\braket{w}{\lambda^+} = -\braket{\lambda^-}{\psi(0)}\braket{w}{\lambda^-} + o(1)
\end{equation}
for every $w\in W$. Indeed, by applying Eqs.~\eqref{eq:p(t)-2terms} and~\eqref{bra_lam_w}, we obtain
\begin{equation}\label{eq:p(t)=sin^2}
    p_\text{approx}(t)=4\left|\braket{\lambda^-}{\psi(0)}\right|^2\sum_{w\in W} \left|\braket{w}{\lambda^-}\right|^2\sin^2{\epsilon t} + o(1) + o(\epsilon t).
\end{equation}
From this, we can determine the optimal running time as
\begin{equation}\label{final_t_run}
    t_{\mathrm{run}} = \frac{\pi}{2\epsilon},
\end{equation}
which shows that the gap between $\lambda^\pm$ determines the optimal running time. A larger gap means a smaller runtime. 
This expression leads to the success probability being
\begin{equation}\label{final_p_succ}
    p_{\mathrm{succ}} = 4\left|\braket{\lambda^-}{\psi(0)}\right|^2\sum_{w\in W} \left|\braket{w}{\lambda^-}\right|^2 + o(1).
\end{equation} 
In the next section, we confirm that condition~\eqref{bra_lam_w} applies to a quantum walk on the Johnson graph with two marked vertices. This condition is also observed in cases with one marked vertex~\cite{CG_04,TSP22} and multiple marked vertices~\cite{Won16b,LP23}. The results of those papers in terms of computational complexity can be fully reproduced using the asymptotic results of this subsection, in particular expressions~\eqref{final_t_run} and~\eqref{final_p_succ}.

\section{Johnson graphs}\label{sec:JohnsonGraph}

For the remainder of this paper, we will apply the general formula outlined in the previous section to the Johnson graph $G(V, E) = J(n, k)$, with two distinct marked vertices. The computation can be expanded to accommodate more than two marked vertices; however, the complexity of the calculations rapidly escalates.
The vertex set $V$ is the set of $k$-subsets of $[n]=\{1,2,\dots,n\}$, and two vertices $v,v'\in V$ are adjacent if and only if $|v\cap v'|=k-1$.
Note that $J(n,1)$ is the complete graph $K_n$, and that $J(n,2)$ is the triangular graph $T_n$, which is strongly regular.
We will assume that $W=\{w_1,w_2\}$, where we fix $k$ and let $n\rightarrow\infty$.
In this case, the matrix $M^{\lambda}$ from Eq.~\eqref{eq:Mww} is a two-dimensional matrix denoted by
\begin{equation*}
	M^{\lambda}= \begin{bmatrix}
		m_1 & m_3 \\
		m_3 & m_2 \\
	\end{bmatrix},
\end{equation*}
where 
\begin{align*}
m_1 &= \sum_{\ell=0}^k \frac{ \left\| P_{\ell}\ket{w_1}\right\|^2 }{\lambda+\gamma \phi_{\ell}}+1,\\
m_2 &= \sum_{\ell=0}^k \frac{ \left\| P_{\ell}\ket{w_2}\right\|^2 }{\lambda+\gamma \phi_{\ell}}+1,\\
m_3 &= \sum_{\ell=0}^k \frac{ \bra{w_1}P_{\ell}\ket{w_2} }{\lambda+\gamma \phi_{\ell}}.
\end{align*}
The eigenvalues $\lambda$ of $H$ are obtained from $\det(M^{\lambda})=m_1m_2-m_3^2=0$.

The number of vertices of $J(n,k)$ is $N=\binom{n}{k}$.
The $k+1$ distinct eigenvalues $\phi_0>\phi_1>\dots>\phi_k$ of $J(n,k)$ are given by
\begin{equation}\label{eigenvalues}
	\phi_{\ell}=(k-\ell)(n-k-\ell)-\ell,
\end{equation}
and the multiplicity of $\phi_{\ell}$ is $\binom{n}{\ell}-\binom{n}{\ell-1}$ (with the convention that $\binom{n}{-1}=0$).
See~\cite{BI1984B,BCN1989B,DKT2016EJC}.
For the Johnson graph $J(n,k)$, it important to remark that $P_{\ell}$ has constant diagonal entries $\big(\binom{n}{\ell}-\binom{n}{\ell-1}\big)/\binom{n}{k}$, so that
\begin{equation}\label{Pw}
	\left\| P_{\ell} \ket{w_1}\right\|^2=\left\| P_{\ell} \ket{w_2}\right\|^2 =\frac{\binom{n}{\ell}-\binom{n}{\ell-1}}{\binom{n}{k}}=\frac{k!(n-k)!(n-2\ell+1)}{\ell!(n-\ell+1)!}.
\end{equation}
In particular, we have $m_1=m_2$.
Let
\begin{equation*}
	\delta=k-|w_1\cap w_2|,
\end{equation*}
which is the distance between $w_1$ and $w_2$.
We note that $m_3$ depends on $\delta$. %, and hence so are $S'_1$ and $S'_2$.
More specifically, it is known that $\bra{w_1}P_{\ell}\ket{w_2}$ is written in terms of a terminating $_3F_2$ hypergeometric series:
\begin{equation}\label{hypergeometric}
	\bra{w_1}P_{\ell}\ket{w_2} =  \left\| P_{\ell}\ket{w_1}\right\|^2 \hypergeometricseries{3}{2}{-\ell,-\delta,\ell-n-1}{k-n,-k}{1}.
\end{equation}
See, e.g.,~\cite[pp.~219--220]{BI1984B} and~\cite[Example 2.3]{MT2009EJC}.
From now on, \emph{we assume that $\delta$ is known in advance.}
After analyzing the search algorithm under this assumption, we will discuss the unrestricted case, in which $\delta$ is unknown.

\subsection{An invariant subspace}

For the single-marked case discussed in~\cite{TSP22}, a $(k+1)$-dimensional invariant subspace of $\mathcal{H}$ played an important role.
For the present case, we can again make use of a similar but more complicated invariant subspace.
For integers $a,b,c$ such that $0\leqslant a\leqslant k-\delta$, $a\leqslant b\leqslant c$, and $2k-n+\delta+a-b\leqslant c\leqslant \min\{k+a-b,\delta+a\}$, let
\begin{equation*}
    \nu_{a,b,c} = \big\{ v\in V : |v\cap w_1\cap w_2|=a,\,\{|v\cap w_1|,|v\cap w_2|\}=\{b,c\} \big\},
\end{equation*}
and set
\begin{equation*}
    \ket{\nu_{a,b,c}} = \sum_{v\in \nu_{a,b,c}} \ket{v}.
\end{equation*}
Let $\mathcal{H}_{\mathrm{inv}}$ denote the subspace of $\mathcal{H}$ spanned by these (unnormalized) vectors $\ket{\nu_{a,b,c}}$.
Then, it follows that $\mathcal{H}_{\mathrm{inv}}$ is invariant under $A$.
%Then these (unnormalized) vectors $\ket{\nu_{a,b,c}}$ span an invariant subspace $\mathcal{H}_{\mathrm{inv}}$ of $\mathcal{H}$.
%This can be shown manually, but the group-theoretic interpretation is as follows.
This can be shown manually, but the group-theoretic explanation is as follows.
When $n>2k$, the automorphism group of $J(n,k)$ is isomorphic to the symmetric group $\mathfrak{S}_n$ on $n$ letters $[n]=\{1,2,\dots,n\}$.
Consider the subgroup $\mathfrak{G}$ of $\mathfrak{S}_n$ consisting of the elements that fix $W$ setwise.
The elements of $\mathfrak{G}$ fix each of the sets $w_1\cap w_2$ and $[n]\setminus (w_1\cup w_2)$, and either fix or swap the two sets $w_1\setminus w_2$ and $w_2\setminus w_1$.
Hence, the order of $\mathfrak{G}$ is $|\mathfrak{G}|=(k-\delta)!\times (n-k-\delta)!\times (\delta!)^2\times 2$.
%The $\nu_{a,b,c}$ are the orbits of $\mathfrak{G}$ on the vertex set $V$, and hence give rise to an invariant subspace.
The $\nu_{a,b,c}$ are precisely the orbits of $\mathfrak{G}$ on the vertex set $V$.
In particular, this implies that a vector $\mathbf{v}\in\mathcal{H}$ is $\mathfrak{G}$-invariant (i.e., $g\mathbf{v}=\mathbf{v}$ for all $g\in\mathfrak{G}$) if and only if $\mathbf{v}$ is a linear combination of the vectors $\ket{\nu_{a,b,c}}$, or equivalently, $\mathbf{v}\in\mathcal{H}_{\mathrm{inv}}$.
For every $\mathbf{v}\in\mathcal{H}_{\mathrm{inv}}$, since $g\mathbf{v}=\mathbf{v}$ and $Ag=gA$ for all $g\in\mathfrak{G}$, we have
%Indeed, since $g\ket{\nu_{a,b,c}}=\ket{\nu_{a,b,c}}$ and $Ag=gA$ for every $g\in\mathfrak{S}_n$, we have
\begin{equation*}
    g(A\mathbf{v})=A(g\mathbf{v})=A\mathbf{v} \qquad (g\in\mathfrak{G}),
    %\red{A\mathbf{v}=A\cdot\frac{1}{|\mathfrak{G}|}\sum_{g\in\mathfrak{G}}g\mathbf{v}=\frac{1}{|\mathfrak{G}|}\sum_{g\in\mathfrak{G}}g\left(A\mathbf{v}\right).}
%	A\ket{\nu_{a,b,c}}=A\cdot\frac{1}{|\mathfrak{G}|}\sum_{g\in\mathfrak{G}}g\ket{\nu_{a,b,c}}=\frac{1}{|\mathfrak{G}|}\sum_{g\in\mathfrak{G}}g\left(A\ket{\nu_{a,b,c}}\right).
\end{equation*}
%\red{The} right-hand side is \red{$\mathfrak{G}$-invariant and thus belongs to $\mathcal{H}_{\mathrm{inv}}$. 
so that $A\mathbf{v}$ is $\mathfrak{G}$-invariant and thus belongs to $\mathcal{H}_{\mathrm{inv}}$.
%and thus written in the form $\sum_{a',b',c'}\kappa_{a',b',c'}\ket{\nu_{a',b',c'}}$ for some scalars $\kappa_{a',b',c'}$.
We have now shown that $\mathcal{H}_{\mathrm{inv}}$ is invariant under $A$.
Moreover, since $W=\nu_{k-\delta,k-\delta,k}$, we have
\begin{equation*}
	\left(\sum_{w\in W} \ket{w} \bra{w}\right) \! \ket{\nu_{a,b,c}} = \delta_{k-\delta,a}\delta_{k-\delta,b}\delta_{k,c} \ket{\nu_{k-\delta,k-\delta,k}}.
\end{equation*}
Hence, it follows that $\mathcal{H}_{\mathrm{inv}}$ is invariant under $H$ and thus also $\mathrm{e}^{-\mathrm{i}Ht}$.
Note also that, %since for every $v\in V$ there is a unique $\nu_{a,b,c}$ such that $v\in\nu_{a,b,c}$},
since the sets $\nu_{a,b,c}$ partition $V$,
the sum of the vectors $\ket{\nu_{a,b,c}}$ equals $\sqrt{N}\ket{\psi(0)}$, so that $\ket{\psi(0)}$ belongs to $\mathcal{H}_{\mathrm{inv}}$.

From now on, \emph{we will always consider eigenvectors $\ket{\lambda}$ in $\mathcal{H}_{\mathrm{inv}}$.}
This assumption provides us with the following strong constraint:
\begin{equation*}
    \braket{w_1}{\lambda}=\braket{w_2}{\lambda} = \frac{\braket{\nu_{k-\delta,k-\delta,k}}{\lambda}}{2}.
\end{equation*}
For the rest of this section, let us call this common value $\alpha$.
The following is a strengthening of  Proposition~\ref{prop:lambda_in_sigma_A} in this case. 

\begin{prop}\label{important equivalence}
The following are equivalent:
(i) $\lambda\in\sigma(-\gamma A)$; (ii) $-\gamma A\ket{\lambda}=\lambda\ket{\lambda}$; (iii) $\alpha=0$.
\end{prop}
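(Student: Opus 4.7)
\noindent\emph{Proof plan.}
My plan is to establish the cyclic chain (iii) $\Rightarrow$ (ii) $\Rightarrow$ (i) $\Rightarrow$ (iii). The step (ii) $\Rightarrow$ (i) is immediate from the definition of $\sigma(-\gamma A)$. For the equivalence (ii) $\Leftrightarrow$ (iii), I would apply Proposition~\ref{prop:lambda_in_sigma_A}, which tells us that (ii) holds if and only if $\braket{w}{\lambda}=0$ for every $w\in W$. Since $\ket{\lambda}\in\mathcal{H}_{\mathrm{inv}}$ is $\mathfrak{G}$-invariant and $\mathfrak{G}$ contains an element swapping $w_1$ and $w_2$, we have $\braket{w_1}{\lambda}=\braket{w_2}{\lambda}=\alpha$; hence these two inner products vanish together exactly when $\alpha=0$.

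The nontrivial direction is (i) $\Rightarrow$ (iii). Assume $\lambda=-\gamma\phi_{\ell}$ for some $\ell\in\{0,1,\ldots,k\}$. Using the definition of $H$ together with $\braket{w_1}{\lambda}=\braket{w_2}{\lambda}=\alpha$, the eigenvalue equation $H\ket{\lambda}=\lambda\ket{\lambda}$ rearranges to
\begin{equation*}
-\gamma(A-\phi_{\ell}I)\ket{\lambda}=\alpha\bigl(\ket{w_1}+\ket{w_2}\bigr).
\end{equation*}
Applying the projector $P_{\ell}$ to both sides, the left-hand side vanishes because $P_{\ell}A=\phi_{\ell}P_{\ell}$, leaving $\alpha\,P_{\ell}(\ket{w_1}+\ket{w_2})=0$. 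Thus it suffices to verify $P_{\ell}(\ket{w_1}+\ket{w_2})\neq 0$, which then forces $\alpha=0$ and completes the chain.

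By Eqs.~\eqref{Pw} and~\eqref{hypergeometric}, a direct computation yields
\begin{equation*}
\bigl\|P_{\ell}(\ket{w_1}+\ket{w_2})\bigr\|^2 = 2\,\|P_{\ell}\ket{w_1}\|^2\left(1+\hypergeometricseries{3}{2}{-\ell,-\delta,\ell-n-1}{k-n,-k}{1}\right),
\end{equation*}
so the problem reduces to showing that this terminating $_3F_2$ value is strictly greater than $-1$ for every admissible $\ell$ and every $\delta\geqslant 1$. Cauchy--Schwarz applied to $\bra{w_1}P_{\ell}\ket{w_2}$ already confines the value to $[-1,1]$, so the only case to exclude is the equality $P_{\ell}\ket{w_1}=-P_{\ell}\ket{w_2}$. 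This is the step I expect to be the main obstacle; the cleanest route is to recognize the normalized quantity $\bra{w_1}P_{\ell}\ket{w_2}/\|P_{\ell}\ket{w_1}\|^2$ as an evaluation of the $\ell$th normalized dual (Eberlein) polynomial of the Johnson scheme at distance $\delta$, whose modulus is strictly less than $1$ whenever $\delta\geqslant 1$, ruling out the bad case uniformly in $\ell$.
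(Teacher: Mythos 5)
Your proof is correct and takes essentially the same route as the paper: the same reduction, by applying $P_{\ell}$ to the eigenvalue equation, to $\alpha\,P_{\ell}(\ket{w_1}+\ket{w_2})=0$, followed by the same non-degeneracy fact, which the paper states as linear independence of $P_{\ell}\ket{w_1}$ and $P_{\ell}\ket{w_2}$ (from primitivity of $J(n,k)$) and you state as strictness in Cauchy--Schwarz, i.e.\ $\big|\bra{w_1}P_{\ell}\ket{w_2}\big|<\left\|P_{\ell}\ket{w_1}\right\|^2$ --- these are the same statement. Two minor caveats: your closing claim that the normalized Eberlein value has modulus strictly less than $1$ ``uniformly in $\ell$'' is false at $\ell=0$, where it equals $+1$ (the paper handles $\ell=0$ separately via $P_0\ket{w_1}=P_0\ket{w_2}=\ket{\psi(0)}/\sqrt{N}$), although your actual requirement that the ${}_3F_2$ value exceed $-1$ still holds there; and for $\ell\geqslant 1$ the strict bound requires $n>2k$ (primitivity), a hypothesis the paper makes explicit.
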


\begin{proof}
(iii)$\Rightarrow$(ii): Note that $-\gamma A\ket{\lambda}=H\ket{\lambda}=\lambda\ket{\lambda}$ by the definition of $H$.

(ii)$\Rightarrow$(i): Clear.

(i)$\Rightarrow$(iii): Suppose that $\lambda=-\gamma\phi_{\ell}$.
Then, by Eq.~\eqref{before master equation}, we have
\begin{equation*}
    \alpha \left(P_{\ell}\ket{w_1}+P_{\ell}\ket{w_2}\right)=0,
\end{equation*}
where we note that $P_{\ell}A=AP_{\ell}=\phi_{\ell}P_{\ell}$.
If $\ell=0$ then $P_0\ket{w_1}=P_0\ket{w_2}=(1/\sqrt{N})\ket{\psi(0)}$, so $\alpha=0$.
If $\ell>0$ then it follows from the fact that $J(n,k)$ is a primitive\footnote{A distance-regular graph with diameter $k$ is called \emph{primitive} if all the distance-$i$ graphs $(i=1,2,\dots,k)$ are connected.} distance-regular graph (provided $n>2k$) that $P_{\ell}\ket{w_1}$ and $P_{\ell}\ket{w_2}$ are linearly independent (cf.~\cite[Sec.~9.1]{BCN1989B},~\cite[p.~137]{BI1984B}), so again we have $\alpha=0$, whence (iii).
\end{proof}

\begin{corollary}\label{- gamma d is not eigenvalue}
$-\gamma\phi_0$ is not an eigenvalue of $H$ on $\mathcal{H}_{\mathrm{inv}}$.
\end{corollary}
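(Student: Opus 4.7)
The plan is to derive a contradiction by supposing that $-\gamma\phi_0$ is an eigenvalue of $H$ on $\mathcal{H}_{\mathrm{inv}}$, with a corresponding eigenvector $\ket{\lambda}\in\mathcal{H}_{\mathrm{inv}}$. The key observation is that Proposition~\ref{important equivalence} has already done the heavy lifting: we can combine the implication (i)$\Rightarrow$(iii) with the explicit form of the $\phi_0$-eigenspace of $A$.

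First, I would invoke Proposition~\ref{important equivalence} with $\lambda=-\gamma\phi_0\in\sigma(-\gamma A)$. This yields two pieces of information at once: (ii) $-\gamma A\ket{\lambda}=-\gamma\phi_0\ket{\lambda}$, meaning $\ket{\lambda}$ lies in the $\phi_0$-eigenspace of $A$, and (iii) $\alpha=0$, meaning $\braket{w_1}{\lambda}=\braket{w_2}{\lambda}=0$.

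Next, I would use the fact that $J(n,k)$ is a connected regular graph, so by Perron--Frobenius the $\phi_0$-eigenspace of $A$ is one-dimensional and spanned by the uniform superposition $\ket{\psi(0)}=\frac{1}{\sqrt{N}}\sum_{v\in V}\ket{v}$. Hence $\ket{\lambda}=c\ket{\psi(0)}$ for some nonzero scalar $c$, and consequently
\begin{equation*}
\alpha=\braket{w_1}{\lambda}=\frac{c}{\sqrt{N}}\neq 0,
\end{equation*}
which directly contradicts $\alpha=0$ from the previous step.

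I do not anticipate any real obstacle here: the whole statement is essentially a one-line consequence of Proposition~\ref{important equivalence} combined with the uniqueness (up to scalar) of the all-ones eigenvector on a connected regular graph. The only point requiring minor care is to note that $\ket{\psi(0)}\in\mathcal{H}_{\mathrm{inv}}$ (already remarked just before the proposition), so restricting to $\mathcal{H}_{\mathrm{inv}}$ does not change the relevant $\phi_0$-eigenspace.
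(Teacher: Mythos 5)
Your proof is correct and follows essentially the same route as the paper's: both invoke Proposition~\ref{important equivalence} to get that $\ket{\lambda}$ is a $\phi_0$-eigenvector of $A$ with $\alpha=0$, and then use the simplicity of $\phi_0$ (connectedness) to contradict $\braket{w_1}{\psi(0)}=1/\sqrt{N}\neq 0$. The paper phrases the final step as linear independence of $\ket{\lambda}$ and $\ket{\psi(0)}$ violating multiplicity one, whereas you write $\ket{\lambda}=c\ket{\psi(0)}$; these are the same argument.
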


\begin{proof}
If $\lambda=-\gamma\phi_0$ then $\ket{\psi(0)}$ and $\ket{\lambda}$ are linearly independent eigenvectors of $A$ with eigenvalue $\phi_0$.
(The linear independence follows from $\braket{w_1}{\psi(0)}=\braket{w_2}{\psi(0)}=1/\sqrt{N}\ne 0$ and $\alpha=0$.)
But $\phi_0$ has multiplicity one since $J(n,k)$ is connected, a contradiction.
\end{proof}

\begin{corollary}\label{zero of m1+m3}
Suppose that $\lambda\not\in\sigma(-\gamma A)$.
Then, $m_1+m_3=m_2+m_3=0$.
\end{corollary}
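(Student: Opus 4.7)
The plan is to combine Proposition~\ref{prop:detM=0} with the invariant-subspace constraint on $\ket{\lambda}$ and then invoke Proposition~\ref{important equivalence} to guarantee that the relevant 0-eigenvector of $M^{\lambda}$ is nonzero. Since by the preceding paragraph we are restricting attention to eigenvectors $\ket{\lambda}$ of $H$ lying in $\mathcal{H}_{\mathrm{inv}}$, we have $\braket{w_1}{\lambda}=\braket{w_2}{\lambda}=\alpha$.

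First I would apply Proposition~\ref{prop:detM=0}: because $\lambda\notin\sigma(-\gamma A)$ and $\lambda$ is an eigenvalue of $H$, the two-component vector $\bigl(\braket{w_1}{\lambda},\braket{w_2}{\lambda}\bigr)^{\!\top}=(\alpha,\alpha)^{\!\top}$ is a $0$-eigenvector of the $2\times 2$ matrix $M^{\lambda}$. Next I would note that, by Proposition~\ref{important equivalence}, the hypothesis $\lambda\notin\sigma(-\gamma A)$ forces $\alpha\neq 0$, so this $0$-eigenvector is genuinely nonzero. Writing out $M^{\lambda}(\alpha,\alpha)^{\!\top}=0$ componentwise gives
\begin{equation*}
    (m_1+m_3)\alpha=0, \qquad (m_3+m_2)\alpha=0,
\end{equation*}
and dividing by $\alpha$ yields $m_1+m_3=m_2+m_3=0$, as desired. (Consistency with $m_1=m_2$, established from Eq.~\eqref{Pw}, is automatic.)

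There is essentially no obstacle here: the corollary is a clean two-line consequence of the two propositions cited, once one recognizes that restricting to $\mathcal{H}_{\mathrm{inv}}$ collapses the generic $0$-eigenvector $(\braket{w_1}{\lambda},\braket{w_2}{\lambda})^{\!\top}$ of $M^{\lambda}$ into the one-parameter family $\alpha\,(1,1)^{\!\top}$. The only point worth emphasizing in the write-up is \emph{why} $\alpha\neq 0$, because without this the matrix equation would be vacuous; this is precisely the content of Proposition~\ref{important equivalence} in the direction $\lambda\notin\sigma(-\gamma A)\Rightarrow\alpha\neq 0$.
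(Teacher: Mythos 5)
Your proof is correct and matches the paper's own argument: the paper likewise reads off $m_1+m_3=m_2+m_3=0$ from Eq.~\eqref{0-eigenvector} applied to the $0$-eigenvector $(\alpha,\alpha)^{\!\top}$ of $M^{\lambda}$, with $\alpha\neq 0$ guaranteed by Proposition~\ref{important equivalence}. You have simply written out the two componentwise equations that the paper leaves implicit.
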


\begin{proof}
Immediate from Eq.~\eqref{0-eigenvector} and $\braket{w_1}{\lambda}=\braket{w_2}{\lambda}=\alpha\ne 0$. %and
%\begin{equation*}
%	M^{\lambda}\begin{bmatrix} \braket{w_1}{\lambda} \\ \braket{w_2}{\lambda} \end{bmatrix}=0. \qedhere
%\end{equation*}
\end{proof}

\bigskip
%This corollary simplifies the discussions of the calculations of $\lambda^{\pm}$ because we can forget about the case where $m_1-m_3=0$, which results in a \red{``bad'' %(but putative)
%eigenvalue. (We will comment on this again later.)} We also note that the results of this subsection can be generalized to other families of distance-transitive graphs, such as the Hamming graphs.

This corollary simplifies the discussions regarding the calculations of $\lambda^{\pm}$, as it allows us to disregard the scenario where $m_1-m_3=0$, which leads to a {``bad'' eigenvalue} (we will revisit this point later). Additionally, note that the findings of this subsection are extendable to other families of distance-transitive graphs, such as Hamming graphs.

\subsection{Calculation of $\lambda^{\pm}$ and related quantities}

First, we find the eigenvalue $\lambda^-$ of the ground state of $H$.
Note that the adjacency matrix $A$ and the oracle $\sum_{w\in W}\ket{w}\bra{w}$ are both nonnegative matrices with respect to the computational basis, and hence so are their matrix representations with respect to the $\ket{\nu_{a,b,c}}$ on $\mathcal{H}_{\mathrm{inv}}$.
Since $\gamma>0$, it follows from the Perron--Frobenius theorem (see, e.g.,~\cite{BCN1989B,BH2012B,HJ13}) that the eigenvalue $\lambda^-$ is in the interval $(-\infty,-\gamma\phi_0)$ and has multiplicity one; {see also Proposition~\ref{from PF}}.
Moreover, observe that $m_1+m_3$, as a function of $\lambda$, is monotone decreasing on $(-\infty,-\gamma\phi_0)$ since $\left|\bra{w_1}P_{\ell}\ket{w_2}\right|\leqslant \left\| P_{\ell}\ket{w_1}\right\|^2=\left\| P_{\ell}\ket{w_2}\right\|^2$ by the Cauchy--Schwarz inequality and Eq.~\eqref{Pw}.
Hence, it follows from Corollary~\ref{zero of m1+m3} that $\lambda^-$ is the \emph{unique} eigenvalue of $H$ on $\mathcal{H}_{\mathrm{inv}}$ in $(-\infty,-\gamma\phi_0)$.

We write
\begin{gather*}
	S_1=S_{w_1w_1}^{(1)}=S_{w_2w_2}^{(1)}, \qquad S_1'=S_{w_1w_2}^{(1)}=S_{w_2w_1}^{(1)}, \\
	S_2=S_{w_1w_1}^{(2)}=S_{w_2w_2}^{(2)}, \qquad S_2'=S_{w_1w_2}^{(2)}=S_{w_2w_1}^{(2)},
\end{gather*}
and
\begin{equation*}
	\lambda=-\gamma\phi_0+\epsilon.
\end{equation*}
Using the equations of subsection~\ref{asymptotic}, $m_1 + m_3$ is given by 
\begin{equation}\label{m1+m3}
    m_1 + m_3 = \frac{2}{N\epsilon} + 1 - \frac{S_1 + S_1'}{\gamma} - \frac{S_2 + S_2'}{\gamma^2}\,\epsilon + O\!\left(\epsilon^2\right).
\end{equation}
We choose $\gamma$ so that $\epsilon(m_1+m_3)$ has no linear term in $\epsilon$, i.e.,
\begin{equation}\label{gamma}
	\gamma = S_1+S'_1.
\end{equation}
We note that $S'_1$ depends on the distance $\delta$, which we currently assume is known in advance.

To proceed with the calculations, we need the following expressions for the sums $S_1,S_2,S'_1$, and $S'_2$:
\begin{gather}
	S_1 = \frac{1}{kn}+O\!\left(\frac{1}{n^2}\right), \qquad S_2 = \frac{1}{k^2n^2}+O\!\left(\frac{1}{n^3}\right), \label{S} \\
	S'_1 = O\!\left(\frac{1}{n^{\delta+1}}\right), \qquad S'_2 = O\!\left(\frac{1}{n^{\delta+2}}\right). \label{S'}
\end{gather}
Eq.~\eqref{S} follows easily from Eqs.~\eqref{eigenvalues} and \eqref{Pw}.
Indeed, for $0\leqslant\ell\leqslant k$, we have
\begin{equation}\label{eigenvalue difference}
    \phi_0-\phi_{\ell}=\ell(n-\ell+1)=\ell n+O(1),
\end{equation}
and
\begin{equation*}
    \|P_{\ell}\ket{w_1}\|^2=\frac{k!(n-2\ell+1)}{\ell!(n-\ell+1)\cdots(n-k+1)}=\frac{k!}{\ell!n^{k-\ell}}+O\!\left(\frac{1}{n^{k-\ell+1}}\right).
\end{equation*}
The proof of Eq.~\eqref{S'} is deferred to Appendix~\ref{sec: proofs of eqs}.
From Eqs.~\eqref{S} and \eqref{S'} it follows that
\begin{equation}\label{evaluate gamma}
	\gamma = \frac{1}{k n} + O\!\left(\frac{1}{n^2}\right).
\end{equation}

For convenience, set
\begin{equation*}
	\eta=\frac{1}{n}.
\end{equation*}
By Eq.~\eqref{Pw} {again}, we have
\begin{equation*}
    \left\| P_{\ell} \ket{w_1}\right\|^2 = \frac{k!\eta^{k-\ell}(1-(2\ell-1)\eta)}{\ell!(1-(\ell-1)\eta)\cdots(1-(k-1)\eta)} \qquad (0\leqslant\ell\leqslant k),  %= \frac{k!\eta^{k-\ell}}{\ell!} +O(\eta^{k-\ell+1})
\end{equation*}
which is analytic around $\eta=0$.
Moreover, it follows from Eq.~\eqref{hypergeometric} that $\bra{w_1}P_{\ell}\ket{w_2}$ is also analytic around $\eta=0$.
By Eqs.~{\eqref{eigenvalue difference}} and \eqref{evaluate gamma}, we have
\begin{equation}\label{denominator}
	\gamma(\phi_0-\phi_{\ell}) = \frac{\ell}{k} +O(\eta) \qquad (1\leqslant\ell\leqslant k),
\end{equation}
and this is also analytic around $\eta=0$.
Since the constant term of the right-hand side is nonzero, it follows from the above comments that $\epsilon(m_1+m_3)$, as a bivariate function of $\eta$ and $\epsilon=\lambda+\gamma\phi_0$, is analytic around the origin $(\eta,\epsilon)=(0,0)$.
Hence, the power series expansion of $\epsilon(m_1+m_3)$ centered at the origin converges absolutely and uniformly on some neighborhood of the origin (see~\cite[Sec.~2.3]{Krantz1992B}), and expressed in the form
\begin{equation*}
	\epsilon(m_1+m_3) = \frac{2}{N} - \frac{ S_2+S'_2 }{{\gamma}^{2}}\,{\epsilon}^{2} + O(\epsilon^3)
\end{equation*}
with respect to $\epsilon$, where we recall that the linear term in $\epsilon$ vanishes, and that
\begin{equation}\label{two comments}
	\frac{2}{N}=2k!\eta^k+O(\eta^{k+1}), \qquad \frac{ S_2+S'_2 }{{\gamma}^{2}} = 1+o(1)
\end{equation}
by Eqs.~{\eqref{S}, \eqref{S'}, and \eqref{evaluate gamma}}.
We note that the expression $O(\epsilon^3)$ in the above expansion is evaluated uniformly in $\eta$ on this neighborhood.
Now, set
\begin{equation*}
	\epsilon^-_0 = -\frac{ \sqrt{2}\, \gamma}{\sqrt{N(S_2+S'_2)}} = -\frac{\sqrt{2}}{\sqrt{N}}(1+o(1)).
\end{equation*}
Pick any scalar $\nu\in(1,2)$, and let $c\in(0,1)$ be such that $O(\epsilon^3)$ above is bounded as
\begin{equation*}
	|O(\epsilon^3)| < |\epsilon|^{1+\nu} 2^{-\nu}
\end{equation*}
for every $\epsilon\in(-c,c)$.
This is possible since $O(\epsilon^3)=o(\epsilon^{1+\nu})$ when $1+\nu<3$. Remember that this evaluation is uniform in $\eta$.
Let $n$ be large enough so that $|\epsilon^-_0|<c/2$ and $(S_2+S'_2)/\gamma^2>2/3$; {see Eq.~\eqref{two comments}}.
Then $\epsilon^-_0\pm |\epsilon^-_0|^{\nu}\in (-c,c)$ and we have
\begin{align*}
	\big|O\big((\epsilon^-_0\pm |\epsilon^-_0|^{\nu})^3\big)\big| &< \big| \epsilon^-_0\pm |\epsilon^-_0|^{\nu} \big|^{1+\nu}2^{-\nu} \\
	&< |\epsilon^-_0|^{\nu} \big| \epsilon^-_0\pm |\epsilon^-_0|^{\nu} \big| \\
	&< |\epsilon^-_0|^{\nu} \big| 2\epsilon^-_0\pm |\epsilon^-_0|^{\nu} \big|\cdot\frac{2}{3} \\
	&< \frac{S_2+S'_2}{\gamma^2} |\epsilon^-_0|^{\nu} \big| 2\epsilon^-_0\pm |\epsilon^-_0|^{\nu} \big|.
\end{align*}
Note that the right-hand side equals the absolute value of $2/N-(S_2+S'_2)\epsilon^2/\gamma^2$ for $\epsilon=\epsilon^-_0\pm |\epsilon^-_0|^{\nu}$.
We also note that $2/N-(S_2+S'_2)\epsilon^2/\gamma^2=0$ for $\epsilon=\epsilon^-_0$.
Therefore, for such large $n$, we have $\epsilon(m_1+m_3)<0$ for $\epsilon=\epsilon^-_0- |\epsilon^-_0|^{\nu}$ and $\epsilon(m_1+m_3)>0$ for $\epsilon=\epsilon^-_0+ |\epsilon^-_0|^{\nu}$.
By the {intermediate} value theorem, this shows that there exists $\epsilon^-\in(\epsilon^-_0- |\epsilon^-_0|^{\nu},\epsilon^-_0+ |\epsilon^-_0|^{\nu})$ such that $m_1+m_3=0$ for $\epsilon=\epsilon^-$.
In view of our previous comments, such $\epsilon^-$ is unique and we must have
\begin{equation*}
	\lambda^-=-\gamma\phi_0 + \epsilon^-.
\end{equation*}
This also establishes
\begin{equation}\label{epsilon-}
	\epsilon^- = -\frac{\sqrt 2}{\sqrt N} (1+o(1)),
\end{equation}
since $|\epsilon^-_0|^{\nu}=o(\epsilon_0^-)$ when $\nu>1$.

Recall the scalar $\alpha=\alpha^-$ for the eigenvalue $\lambda^-$, which is nonzero by Proposition~\ref{important equivalence}.
We assume that $\alpha^->0$ after a rescaling of $\ket{\lambda^-}$.
By Eq.~\eqref{masterequation} and $\sum_{\ell=0}^k \left\| P_{\ell} \ket{\lambda^-} \right\|^2=1$,
we obtain
\begin{equation*}
	1 = (\alpha^-)^2 \left(\frac{4}{N(\epsilon^-)^2}+\frac{2(S_2+S'_2)}{\gamma^2}+O(\epsilon^-)\right).
\end{equation*}
%\begin{equation*}
%	\red{1 = (\alpha^-)^2 \left(\frac{||P_0\ket{w_1}||^2(m_1 - m_3)^2}{(\epsilon^-)^2} + \frac{(m_1^2 + m_3^2)S_2 - 2m_1m_3S_2'}{\gamma^2} + O(\epsilon^-)\right).}
%\end{equation*}
%\red{Note that the asymptotic expressions for $m_1$ and $m_3$ are obtained using Eqs.~\eqref{S} and \eqref{S'} in the analysis of subsection~\ref{asymptotic}. They are
%\begin{align*}
%    m_1&=\frac{1}{\sqrt{2N}}(1+o(1)),\\
%    m_3&= - \frac{1}{\sqrt{2N}}(1+o(1)).
%\end{align*}}
Using Eqs.~{\eqref{two comments}} and \eqref{epsilon-}, we obtain
\begin{equation}\label{alpha-}
	\alpha^- = \frac{1}{2}+o(1).
\end{equation}

%\begin{equation}\label{alpha-}
%	\red{\alpha^- = \frac{1}{2m_1} + o(1) = \frac{\sqrt{N}}{\sqrt{2}} + o(1).}
%\end{equation}

Recall that $\ket{\psi(0)}$ is a $\phi_0$-eigenvector of $A$.
By Eq.~\eqref{masterequation} with $\ell=0$, we also obtain
\begin{equation}\label{how s and lambda- interact}
	\braket{\psi(0)}{\lambda^-} = -\frac{ 2\alpha^- }{\sqrt{N}\,\epsilon^-} = \frac{1}{\sqrt{2}}+o(1),
\end{equation}
where we have also used Eqs.~\eqref{epsilon-} and \eqref{alpha-}.

We next find the eigenvalue $\lambda^+$ for the first excited state of $H$.
On the one hand, by the definition of $H$ and $\ket{\psi(0)}$, we have
\begin{equation*}
	\bra{\psi(0)}H\ket{\psi(0)} = -\gamma\phi_0-\frac{2}{N}.
\end{equation*}
On the other hand, we also have
\begin{equation*}
	\bra{\psi(0)}H\ket{\psi(0)} = \lambda^-\big|\braket{\psi(0)}{\lambda^-}\big|^2 + \sum_{\lambda\ne\lambda^-} \lambda \big|\braket{\psi(0)}{\lambda}\big|^2,
\end{equation*}
where the sum on the right-hand side is over the eigenvalues of $H$ on $\mathcal{H}_{\mathrm{inv}}$ other than $\lambda^-$.
By Eqs.~\eqref{epsilon-} and \eqref{how s and lambda- interact}, we have
\begin{equation*}
	\lambda^-\big|\braket{\psi(0)}{\lambda^-}\big|^2 = -\frac{\gamma\phi_0}{2} +o(1),
\end{equation*}
where we also note that $\gamma\phi_0=1+o(1)$ by Eqs.~\eqref{eigenvalues} and \eqref{evaluate gamma}.
Suppose now that $\lambda^+\geqslant -\gamma\phi_1$.
Then, since $\sum_{\lambda}\big|\braket{\psi(0)}{\lambda}\big|^2=1$, it follows from Eqs.~\eqref{denominator} with $\ell=1$ and \eqref{how s and lambda- interact} that
\begin{equation*}
	\sum_{\lambda\ne\lambda^-} \lambda \big|\braket{\psi(0)}{\lambda}\big|^2 \geqslant -\gamma\phi_1 \sum_{\lambda\ne\lambda^-} \big|\braket{\psi(0)}{\lambda}\big|^2 = -\frac{\gamma\phi_1}{2}+o(1) = -\frac{\gamma\phi_0}{2} +\frac{1}{2k}+o(1).
\end{equation*}
However, this implies that
\begin{equation*}
	\bra{\psi(0)}H\ket{\psi(0)} \geqslant -\gamma\phi_0+\frac{1}{2k}+o(1) > -\gamma\phi_0-\frac{2}{N}
\end{equation*}
for large $n$, which is a contradiction.
(Recall that $k$ is fixed.)
Hence, for large $n$, it follows that $\lambda^+ < -\gamma\phi_1$ and thus $\lambda^+\in (-\gamma\phi_0,-\gamma\phi_1)$ by virtue of Corollary~\ref{- gamma d is not eigenvalue}.
Observe that $m_1+m_3$ (again as a function of $\lambda$) is also monotone decreasing on $(-\gamma\phi_0,-\gamma\phi_1)$, from which it follows that $\lambda^+$ is the \emph{unique} eigenvalue of $H$ on $\mathcal{H}_{\mathrm{inv}}$ in this range.
We may now proceed as in the above discussions, and conclude that
\begin{equation*}
	\lambda^+=-\gamma\phi_0+\epsilon^+,
\end{equation*}
where
\begin{equation}\label{epsilon+}
	\epsilon^+ = \frac{\sqrt 2}{\sqrt N} (1+o(1)).
\end{equation}
We also obtain the scalar $\alpha=\alpha^+$ for the eigenvalue $\lambda^+$ as

\begin{equation}\label{alpha+}
	\alpha^+ = \frac{1}{2}+o(1),
\end{equation}
and
\begin{equation}\label{how s and lambda+ interact}
	\braket{\psi(0)}{\lambda^+} = -\frac{ 2\alpha^+ }{\sqrt{N}\,\epsilon^+} = -\frac{1}{\sqrt{2}}+o(1).
\end{equation}

%%%\red{We mentioned earlier a ``bad'' eigenvalue coming from $m_1-m_3=0$, which we have successfully excluded by restricting the dynamics to $\mathcal{H}_{\mathrm{inv}}$. As a function of $\lambda=-\gamma\phi_0+\epsilon$, $m_1-m_3$ is monotone descreasing on the interval $(-\infty,-\gamma\phi_1)$, and there is a unique $\lambda$ in this range for which $m_1-m_3=0$. In the same way as Eq.~\eqref{m1+m3}, we obtain
%\begin{equation*}
%    m_1-m_3=1-\frac{S_1-S_1'}{\gamma}-\frac{S_2-S_2'}{\gamma^2}\epsilon+O(\epsilon^2)=\frac{2S_1'}{\gamma}-\frac{S_2-S_2'}{\gamma^2}\epsilon+O(\epsilon^2),
%\end{equation*}
%where we have used Eq.~\eqref{gamma}. By Eqs.~\eqref{S}, \eqref{S'}, and \eqref{evaluate gamma}, we can roughly estimate the corresponding $\epsilon$ as follows:
%\begin{equation*}
%    \epsilon=\frac{2S_1'\gamma}{S_2-S_2'}=O\!\left(\frac{1}{n^{\delta}}\right).
%\end{equation*}
%In view of Eqs.~\eqref{two comments}, \eqref{epsilon-}, and \eqref{epsilon+}, this eigenvalue $\lambda$ behaves particularly badly when $\delta>k/2$ in the sense that $\lambda\in(\lambda^-,\lambda^+)$ for large $n$. In any case, this bad behavior does not impact the dynamics because $\braket{\psi(0)}{\lambda}=0$.}

Earlier, we referenced a ``bad'' eigenvalue arising from the condition $m_1-m_3=0$, which we have effectively excluded by limiting the dynamics to $\mathcal{H}_{\mathrm{inv}}$. As a function of $\lambda=-\gamma\phi_0+\epsilon$, the expression $m_1-m_3$ demonstrates a monotone decrease over the interval $(-\infty,-\gamma\phi_1)$. Within this range, there exists a unique $\lambda$ for which $m_1-m_3=0$. Following a similar approach to that in Eq.~\eqref{m1+m3}, we derive the following equation:
\begin{equation*}
    m_1-m_3=1-\frac{S_1-S_1'}{\gamma}-\frac{S_2-S_2'}{\gamma^2}\epsilon+O(\epsilon^2)=\frac{2S_1'}{\gamma}-\frac{S_2-S_2'}{\gamma^2}\epsilon+O(\epsilon^2),
\end{equation*}
where we have applied Eq.~\eqref{gamma}. Utilizing Eqs.~\eqref{S}, \eqref{S'}, and \eqref{evaluate gamma}, we can approximate the corresponding $\epsilon$ as:
\begin{equation*}
    \epsilon=\frac{2S_1'\gamma}{S_2-S_2'}=O\!\left(\frac{1}{n^{\delta}}\right).
\end{equation*}
Considering Eqs.~\eqref{two comments}, \eqref{epsilon-}, and \eqref{epsilon+}, this eigenvalue $\lambda$ exhibits particularly problematic behavior when $\delta>k/2$, in the sense that $\lambda\in(\lambda^-,\lambda^+)$ for large $n$. Nevertheless, this adverse behavior does not affect the dynamics since $\braket{\psi(0)}{\lambda}=0$.

\subsection{Computational complexity}

Now we apply the general recipe outlined in the previous section to the search algorithm on $J(n,k)$ by fixing $k$ and letting $n\rightarrow\infty$. Using Eqs.~\eqref{alpha-} and~\eqref{alpha+}, we check that $\braket{w}{\lambda^+} = \braket{w}{\lambda^-} + o(1)$. Using~\eqref{how s and lambda- interact}  and~\eqref{how s and lambda+ interact}, we check that $\braket{\lambda^+}{\psi(0)} = -\braket{\lambda^-}{\psi(0)} + o(1).$
Thus, condition~\eqref{bra_lam_w} is true.

The probability of finding a marked vertex as a function of time is given by Eq.~\eqref{eq:p(t)=sin^2} and then using Eq.~\eqref{final_t_run} the optimal running time is
\begin{equation*}
	t_{\mathrm{run}} = \frac{\pi\sqrt{N}}{2\sqrt 2},
\end{equation*}
and using Eq.~\eqref{final_p_succ} the success probability is 
\begin{equation}\label{success Johnson}
	p_{\mathrm{succ}}=1+o(1).
\end{equation}
Note that the asymptotic success probability does not depend on $\delta$, the distance between the two marked vertices $w_1$ and $w_2$.

We have so far assumed that $\delta$ is known in advance; see Eq.~\eqref{gamma}.
In the case where we have no prior information on $\delta$, we simply apply the above search algorithm in turn for $\delta=1,2,\dots,k$.
We note that we can check if the outcome is a marked vertex by applying the oracle for each of the $k$ steps.
Hence, the success probability is still given by Eq.~\eqref{success Johnson}.
The running time is at most
\begin{equation*}
	\frac{\pi k\sqrt{N}}{2\sqrt 2}.
\end{equation*}
We fix $k$, so this is still of order $O(\sqrt{N})$.

%%%%%%%%%%%%%%%%%%%
\section{Final remarks}\label{sec:conc}

We have developed a framework to assess analytically the computational complexity of spatial search algorithms using continuous-time quantum walks on graphs with multiple marked vertices. This framework is underpinned by four propositions that not only lay the mathematical foundation but also offer a practical method for computing the eigenvalues and eigenvectors of the Hamiltonian. Additionally, these propositions facilitate the determination of the optimal value for the parameter $\gamma$ in the search algorithm. A key step in this process involves calculating a basis of eigenvectors of the kernel of a symmetric matrix, the dimension of which equals the number of marked vertices. This method applies to any graph whose associated Hamiltonian exhibits a sufficiently small gap between $\lambda^\pm$. In simpler scenarios, the success probability of the search algorithm can be represented as $a^2\sin^2\epsilon t$, with $2\epsilon$ usually being the gap. The optimal running time is $t=\pi/2\epsilon$, leading to a success probability $p_{\mathrm{succ}}=a^2$. For the search algorithm to achieve optimality, implying a time complexity of $O(\sqrt{N})$, it is crucial that $\epsilon$ scales as $O(1/\sqrt{N})$ and $a$ maintains an order of $\Omega(1)$.

We demonstrated our framework with an analysis of the spatial search algorithm on the Johnson graph $J(n,k)$ featuring two marked vertices. Our analysis indicates that, with $k$ fixed and the distance $\delta$ between the marked vertices known beforehand, the optimal running time is $\pi\sqrt{N}/(2\sqrt{2})$, where $N$ is the total number of vertices. In cases where $\delta$ is not known, the algorithm must be executed $k$ times, each assuming a different $\delta$. This results in a total running time of at most $\pi k\sqrt{N}/(2\sqrt{2})$, maintaining a success probability of $1+o(1)$ as $n$ approaches infinity.

Usually, the condition for the search algorithm be optimal is that the spectral gap of the Hamiltonian is sufficiently larger than the overlap of the initial condition with the
marked vertex~\cite{CNR20b}. This condition holds true in the cases analyzed in the literature when there is only one marked vertex, to the best of our knowledge. The problem becomes more intricate in the presence of multiple marked vertices since the time complexity depends on specific vertex placements in the general case. As discussed above, in Johnson graphs with two marked vertices, the time complexity is influenced by the distance between the marked vertices.

The results of this work can accelerate numerical calculations of the computational complexity of search algorithms on arbitrary graphs. In this context, certain analytical expressions from Sections~\ref{asymptotic} and~\ref{subssec:compcomplexity} can be circumvented. The numerical approach is outlined as follows. Start by determining numerically $\phi_0$ and $\ket{\phi_0}$. Next, use the entries of matrix $M^\lambda$, as defined in Eq.~\eqref{eq:Mww}, to identify $\lambda^\pm$ and $\gamma$. Since they need to be found concurrently, we assign numerical values to both $\gamma$ and $\lambda$ and then check whether det$(M^\lambda)=0$. If this is not the case, we change $\gamma$ and $\lambda$ until locating the roots. Following this, $\epsilon$ is deduced from $|\lambda^-+\gamma\phi_0|$. We then we proceed to numerically calculate $\braket{\lambda^\pm}{\phi_0}$ and $\braket{w}{\lambda^\pm}$ for all $w$. $\braket{\lambda^+}{\phi_0}$ and $\braket{w}{\lambda^+}$ must be nonzero for at least one $w$, otherwise $\lambda^+$ does not fulfill the required conditions. Afterwards, we compute the success probability $p(t)$ using Eq.~\eqref{eq:p(t)-2terms}. By plotting $p(t)$, we can evaluate whether the curve resembles the square of a sinusoidal function and whether the peak is prominent.

\section*{Acknowledgements}
The authors thank the editors of ACM Transactions on Quantum Computing for the invaluable help in improving this paper.
The work of P.H.G. Lugão was supported by CNPq grant number 140897/2020-8.
The work of R. Portugal was supported by FAPERJ grant number CNE E-26/202.872/2018, and CNPq grant number 308923/2019-7. 
The work of M. Sabri was supported by JST SPRING grant number JPMJSP2114.
The work of H. Tanaka was supported by JSPS KAKENHI grant numbers JP20K03551 and JP23K03064.

%%%%%%%%%%%%%%%% APPENDIX %%%%%%%%%%%%%%%%%%%%

\appendix
\section*{Appendix}
%%%%%%%%%%%%%%%%
\section{The proof of Eq.~\eqref{S'}}
\label{sec: proofs of eqs}

We first invoke some results from~\cite{Terwilliger2005DCC} to evaluate the $_3F_2$ series in Eq.~\eqref{hypergeometric}.
Set
\begin{equation*}
    \phi_{\ell}=\phi_0+h\ell (\ell+1+s), \qquad \phi^*_{\ell}=\phi_0^*+s^*\ell
\end{equation*}
for $0\leqslant\ell\leqslant k$, and
\begin{equation*}
    \alpha_{\ell}=hs^*\ell(\ell-k-1)(\ell+r), \qquad \beta_{\ell}=hs^*\ell(\ell-k-1)(\ell+r-s-k-1)
\end{equation*}
for $1\leqslant\ell\leqslant k$, where
\begin{equation*}
    s=-n-2, \qquad r=k-n-1,
\end{equation*}
and $\phi_0,\phi_0^*,h$, and $s^*$ are arbitrary with $h,s^*\ne 0$.
The sequence
\begin{equation*}
    \big(\phi_{\ell},\phi_{\ell}^*,\ell=0,\dots,k;\alpha_{\ell'},\beta_{\ell'},\ell'=1,\dots,k\big)
\end{equation*}
is a \emph{parameter array} of dual Hahn type; cf.~\cite[Example 5.12]{Terwilliger2005DCC}.
Note that the $\phi_{\ell}$ agree with the eigenvalues of $J(n,k)$ (cf.~Eq.~\eqref{eigenvalues}) provided that we set $\phi_0=k(n-k)$ and $h=1$.
Moreover, we have
\begin{equation*}
    _3F_2\!\left.\left(\begin{matrix} -\ell,-\delta,\ell-n-1 \\ k-n,-k\end{matrix}\,\right|1\right) = f_{\delta}(\phi_{\ell}),
\end{equation*}
where
\begin{equation*}
    f_{\delta}(x) = \sum_{\nu=0}^{\delta} \frac{(x-\phi_0)\cdots(x-\phi_{\nu-1})(\phi_{\delta}^*-\phi_0^*)\cdots(\phi_{\delta}^*-\phi_{\nu-1}^*)}{\alpha_1\cdots\alpha_{\nu}}.
\end{equation*}
By~\cite[Theorem 4.1, Lemma 4.2]{Terwilliger2005DCC}, we have
\begin{equation*}
    f_{\delta}(x) = \frac{\beta_1\cdots\beta_{\delta}}{\alpha_1\cdots\alpha_{\delta}} f_{\delta}^{\Downarrow}(x) = \frac{\delta!}{(k-n)_{\delta}} f_{\delta}^{\Downarrow}(x),
\end{equation*}
where
\begin{equation*}
    f_{\delta}^{\Downarrow}(x) = \sum_{\nu=0}^{\delta} \frac{(x-\phi_k)\cdots(x-\phi_{k-\nu+1})(\phi_{\delta}^*-\phi_0^*)\cdots(\phi_{\delta}^*-\phi_{\nu-1}^*)}{\beta_1\cdots\beta_{\nu}},
\end{equation*}
and $(a)_m=a(a+1)\cdots(a+m-1)$ denotes the shifted factorial.
We have
\begin{equation*}
    f_{\delta}^{\Downarrow}(\phi_{\ell}) = {}_3F_2\!\left.\left(\begin{matrix} \ell-k,-\delta,n-k-\ell+1 \\ 1,-k\end{matrix}\,\right|1\right),
\end{equation*}
and hence
\begin{equation}\label{modified 3F2}
    _3F_2\!\left.\left(\begin{matrix} -\ell,-\delta,\ell-n-1 \\ k-n,-k\end{matrix}\,\right|1\right) = \frac{\delta!}{(k-n)_{\delta}} {}_3F_2\!\left.\left(\begin{matrix} \ell-k,-\delta,n-k-\ell+1 \\ 1,-k\end{matrix}\,\right|1\right).
\end{equation}

By Eqs.~\eqref{eigenvalues}, \eqref{Pw}, \eqref{hypergeometric}, and \eqref{modified 3F2}, we have
\begin{equation*}
	\frac{\bra{w_1}P_{\ell}\ket{w_2}}{\phi_0-\phi_{\ell}}=O\!\left(\frac{1}{n^{k-\ell+1+\delta-\tau}}\right), \qquad \frac{\bra{w_1}P_{\ell}\ket{w_2}}{(\phi_0-\phi_{\ell})^2}=O\!\left(\frac{1}{n^{k-\ell+2+\delta-\tau}}\right)
\end{equation*}
for $1\leqslant \ell\leqslant k$, where $\tau=\min\{k-\ell,\delta\}$, and Eq.~\eqref{S'} follows.

We remark that the above method can be generalized to other families of $Q$-polynomial distance-regular graphs, again including the Hamming graphs (see~\cite[Chap.~9]{BCN1989B}).

%\bibliographystyle{unsrt}
%\bibliography{references}

\end{document}